\theoremstyle{plain} 
\newtheorem{theorem}{Theorem}
\newtheorem{proposition}{Proposition}
\theoremstyle{definition}
\newtheorem{definition}{Definition}
\newtheorem{remark}{Remark}
\newcommand{\bfH}{\mathbf{H}}
\newcommand{\bmphi}{\boldsymbol{\phi}}
\newcommand{\bmbeta}{\boldsymbol{\beta}}
\newcommand{\bfzr}{\mathbf{0}}
\newcommand{\calG}{\mathcal{G}}
\newcommand{\supp}{\text{supp}}
\newcommand{\defi}{\mathop{=}\limits^{\Delta}}      
\providecommand{\abs}[1]{\lvert#1\rvert}
\providecommand{\norm}[1]{\lVert#1\rVert}
\begin{document}

\title{Penalized Estimation of Directed Acyclic Graphs\\ From Discrete Data}
\author{Jiaying Gu\thanks{These two authors contributed equally to this work.}, Fei Fu\footnotemark[1], and Qing Zhou\thanks{
To whom correspondence should be addressed (email: zhou@stat.ucla.edu).}\\
Department of Statistics, University of California, Los Angeles, CA 90095, USA}
\date{}

\maketitle

\begin{abstract}

Bayesian networks, with structure given by a directed acyclic graph (DAG), are a popular class of graphical models. 
However, learning Bayesian networks from discrete or categorical data is particularly
challenging, due to the large parameter space and the difficulty in searching for a sparse structure.
In this article, we develop a maximum penalized likelihood method to tackle this problem. 
Instead of the commonly used multinomial distribution,
we model the conditional distribution of a node given its parents by multi-logit regression, in which
an edge is parameterized by a set of coefficient vectors with dummy variables encoding the levels of a node.
To obtain a sparse DAG, a group norm penalty is employed, and
a blockwise coordinate descent algorithm is developed to maximize the penalized likelihood subject to the  
acyclicity constraint of a DAG. 
When interventional data are available, our method constructs
a causal network, in which a directed edge represents a causal relation.
We apply our method to various simulated and real
data sets. The results show that our method is 
very competitive, compared to many existing methods, in DAG estimation 
from both interventional and high-dimensional 
observational data.

KEY WORDS: Coordinate descent; Discrete Bayesian
network; Multi-logit regression; Group norm penalty.

\end{abstract}

\section{Introduction}
\label{sec:dintro}

Bayesian networks are a class of graphical models whose structure
implies conditional independence relationships among a set of
random variables. It is graphically represented by a directed acyclic graph
(DAG). Recent years have seen its popularity in the biological and medical
sciences for inferring gene regulatory networks and cellular
networks, partially attributed to the fact that it can be used for
causal inference. Learning the structure of these biological networks from data is a key to understanding their
functions. Most methods that have been proposed for structure learning of DAGs fall
into two categories. 

The first category encompasses the
constraint-based methods that rely on a set of conditional
independence tests. The tests are used to examine the existence of edges between nodes. In practice the assumptions behind these methods can be strong, which constitutes the main drawback of these methods. The PC algorithm proposed by \citet{Spirtes93} and the MMPC algorithm by \cite{tsamardinos2006max} are two well-known examples. 
The second category includes score-based methods 
whose goal is to search for a DAG that maximizes certain scoring
function. The scoring functions that have been employed include
several Bayesian Dirichlet metrics \citep{Buntine91, Cooper92,
  Heckerman95}, Bayesian information criterion \citep{Chickering97},
minimum description length \citep{Bouckaert93prob,
  Suzuki93, Bouckaert94prob, Lam94}, entropy \citep{Herskovits90}, et
cetera. There are also Monte Carlo methods \citep{Ellis08, Zhou11}
which draw a sample of DAGs from a posterior distribution.
Other recent developments on score-based methods include the work of \cite{scutari2016empirical}, which proposed a posterior score function with an uniform prior for discrete data.
Additionally, there are hybrid methods which combine the two approaches. The idea is to narrow the search space first using a constraint-based method, and then use a score-based method to learn the DAG structure \citep{tsamardinos2006max, gamez2011learning}.

With the rising interest in sparse statistical
modeling, score-based methods seem particularly attractive since
various sparse regularization techniques are potentially applicable. Assuming a given natural ordering among the nodes, \cite{Shojaie10} decomposed
DAG estimation into a sequence of lasso regression problems. 
\cite{schmidt2006lassoordersearch} and \cite{schmidt2007learning} exploited from a computational perspective the idea of using $\ell_1$ regularization to learn the structure of DAGs. 
\cite{Fu13} developed an $\ell_1$-penalized likelihood approach to 
 learn the structure of sparse DAGs from Gaussian data without assuming a given ordering.
This method has been further generalized to the use of concave penalties
by \cite{Aragam14}.  
\cite{han2016estimation} proposed a two-stage adaptive lasso approach for structure estimation of DAGs. 
There are also theoretical developments on $\ell_0$-penalized estimation
of high-dimensional DAGs under a multivariate Gaussian model \citep{geer2013}.

Despite the recent fast developments on sparse regularization methods for learning Gaussian DAGs,
a generalization to discrete data is highly nontrivial.
First, each node
now represents a factor coded by a group of dummy variables. In order to select
a group  of dummy variables together, we need to use a group norm penalty instead of penalizing
individual coefficients. 
Second, the log-likelihood function for categorical data has more parameters, and development of an algorithm to maximize the penalized log-likelihood becomes much more challenging.  
In this paper, we propose a principled generalization of the 
penalized likelihood methodology in our previous work \citep{Fu13} to estimate
sparse DAGs from categorical data without knowing the ordering among variables.
To reduce the parameter space, we use a multi-logit regression to model the conditional distributions
in a discrete Bayesian network. A blockwise coordinate descent (CD) algorithm is developed, which
may take both observational and interventional data. 
Through extensive comparisons, we demonstrate that our method can outperform many competitors 
in learning discrete Bayesian networks from interventional data or from high-dimensional ($p>n$) observational data. Our algorithm has been implemented in the R package, \texttt{discretecdAlgorithm}, available on CRAN.

The remainder of this paper is organized as follows.
Section~\ref{sec:dBN} describes the proposed multi-logit regression model and
formulates the structure learning problem. Section~\ref{sec:dCDalgo} develops in detail the 
blockwise CD algorithm for learning discrete Bayesian networks. 
Section~\ref{sec:dsimul} reports numerical results of our
method on different types of simulated networks for both interventional and 
observational data,
and Section~\ref{sec:realNetworks} presents results on
real networks. Sections \ref{sec:dsimul} and \ref{sec:realNetworks} also include extensive comparisons with other competing methods.
The paper is concluded with a discussion in Section~\ref{sec:dsummary}. 
In the Appendix, we
establish some asymptotic properties of our penalized DAG estimator.

\section{Problem Formulation}\label{sec:dBN}

\subsection{Discrete Bayesian networks}\label{sec:dBayesNet}

The structure of a Bayesian network for $p$ random variables
$X_1,\ldots,X_p$ is given by a DAG
$\mathcal{G}=(V,E)$. The set of nodes $V=\left\{ 1,\ldots,p \right\}$
represents the set of random variables $\left\{ X_1,\ldots,X_p \right\}$, 
and the set of edges is given by
$E = \left\{(j,i) \in V \times V: j \rightarrow i\right\}$, where $j\rightarrow i$ is a directed edge in  $\mathcal{G}$.
Given the structure of $\mathcal{G}$, the joint probability density (mass) function of
$\left( X_1,\ldots,X_p \right)$ can be factorized as
\begin{equation}\label{eq:BNdef}
p(x_1,\ldots,x_p)=\prod_{i=1}^{p}p(x_i|\Uppi_i^{\mathcal{G}}),
\end{equation} 
where  $\Uppi_i^{\mathcal{G}} = \left\{j \in V: (j, i) \in E \right\}$	
is called the set of parents of $X_i$ and $p(x_i|\Uppi_i^{\mathcal{G}})$ denotes
the conditional density of $X_i$ given $\Uppi_i^{\mathcal{G}}$.
Throughout the paper, we use $j$ and $X_j$ interchangeably.

Given a joint distribution, there may exist multiple factorizations of the form in
(\ref{eq:BNdef}), leading to different DAGs. 
The DAGs encoding the same set of conditional independence relations
form an equivalence class.  
All DAGs in the same equivalence class have the same skeleton and v-structures. 
Here, a v-structure is a triplet $\{i, j, k\} \subset V$ of the form $i \rightarrow k \leftarrow j$, 
while $i$ and $j$ are not directly connected.
However, when used for causal inference, equivalent DAGs do
not have the same causal interpretation and can be differentiated
based on experimental data. 
There are methods for learning causal DAGs
from a mix of observational and experimental data \citep{cooper1999causal,meganck2006learning,Ellis08,hauser2015jointly}
and related work on inferring gene networks from perturbed expression data 
\citep{pe2001inferring,pournara2004reconstruction, shojaie2014inferring}.

We describe briefly how the joint
distribution of a Bayesian network can be
modified to incorporate experimental data. For a detailed account of
causal inference using Bayesian networks, please refer to
\citet{Pearl00} and references therein. Assuming $X_i$, 
$i\in \mathcal{M}\subset\{1,\ldots,p\}$, is under experimental intervention, the joint density
in (\ref{eq:BNdef}) becomes
\begin{equation}\label{eq:BNint}
p(x_1,\ldots,x_p)=\prod_{i \notin \mathcal{M}}p(x_i|\Uppi_i^{\mathcal{G}})\prod_{i \in \mathcal{M}}p(x_i|\bullet),
\end{equation}
where $p(x_i|\bullet)$ specifies the distribution of $X_i$ under
intervention. Experimental data generated from
$\mathcal{G}$ can therefore be considered as being generated
from the DAG $\mathcal{G}'$ obtained by removing all directed edges in $\mathcal{G}$
pointing to the variables under intervention. 
It should be noted that \eqref{eq:BNint} also applies to observational data for which
$\mathcal{M}$ is simply empty, and in this case, \eqref{eq:BNint} reduces to \eqref{eq:BNdef}.  Hereafter, we develop our method under the assumption
that part of the data are generated under experimental intervention, while regarding purely observational
data as the special case of $\mathcal{M}=\varnothing$.

In a discrete Bayesian network, each variable $X_i$ is considered a
factor with $r_i$ levels, indexed by $\left\{ 1, \ldots, r_i \right\}$. The set
of its parents $\Uppi_i^{\mathcal{G}}$ has
a total of $q_i = \prod_{j \in \Uppi_i^{\mathcal{G}}} r_j$ possible joint
states $\left\{ \boldsymbol{\pi}_k: k = 1, \ldots, q_i \right\}$. 
Let $\Uptheta_{ijk}=P\left(X_i = j \mid \Uppi_i^{\mathcal{G}}=\boldsymbol{\pi}_k \right)$.
A discrete Bayesian network $\mathcal{G}$ may be 
parameterized by $\boldsymbol{\Uptheta} = \{\Uptheta_{ijk} \geq 0:
\sum_j\Uptheta_{ijk}=1 \}$ via a product multinomial model given the graph structure. 
The number of parameters in this product multinomial model is
\begin{align*}
{N(\boldsymbol{\Uptheta})} =
\sum_{i=1}^p r_iq_i=\sum_{i=1}^p r_i\prod_{j \in \Uppi_i^{\mathcal{G}}} r_j. 
\end{align*}
If we assume that each variable has $\mathcal{O}(r)$ levels, then
\begin{equation}\label{eq:multirate}
{N(\boldsymbol{\Uptheta})} = \mathcal{O}\left(\sum_{i=1}^p r^{1+\abs{\Uppi_i^{\mathcal{G}}}}\right), 
\end{equation}
which grows exponentially as the size of the parent set
$\abs{\Uppi_i^{\mathcal{G}}}$ increases. To reduce the number of free parameters, we
propose a multi-logit model for discrete Bayesian networks under
which development of a penalized likelihood method is 
straightforward. For the same DAG structure,
the number of parameters can be much smaller compared to the product
multinomial model.  

\subsection{A multi-logit model}\label{sec:dplogL}

We encode the $r_i$ levels of $X_i$, $i=1,\ldots,p$, by a group of $d_i=r_i-1$ dummy variables.
Let $\mathbf{x}_{i} \in\{0,1\}^{d_i}$ be the group of dummy variables for $X_i$
and $\mathbf{x}=(1, \mathbf{x}_{1}, \ldots,\mathbf{x}_{p})$ be a $d$-vector, where $d=1+\sum_{i=1}^pd_i$. 
For a discrete Bayesian network $\mathcal{G}$, we model the
conditional distribution $[X_j|\Uppi_j^{\mathcal{G}}]$, $j = 1,\ldots,p$, by the following multi-logit regression model
\begin{eqnarray}\label{eq:mlogit}
P(X_j=\ell \mid \Uppi_j^{\mathcal{G}})&=& \dfrac{\exp(\beta_{j\ell0}+
\sum_{i=1}^p\mathbf{x}_{i}^T\boldsymbol{\beta}_{j\ell i})}{\sum_{m=1}^{r_j} \exp(\beta_{jm0}
+\sum_{i=1}^p\mathbf{x}_{i}^T\boldsymbol{\beta}_{jmi})} \nonumber\\
&=& \dfrac{\exp(\mathbf{x}^T\boldsymbol{\beta}_{j\ell\cdot})}
{\sum_{m=1}^{r_j}\exp(\mathbf{x}^T\boldsymbol{\beta}_{jm\cdot})}\defi p_{j\ell}(\mathbf{x}),
\end{eqnarray}
for $\ell=1, \ldots, r_j$, where $\beta_{j\ell 0}$ is the
intercept, $\boldsymbol{\beta}_{j\ell i} \in \mathbb{R}^{d_i}$ is the
coefficient vector for $X_i$ to predict the
$\ell^{\text{th}}$ level of $X_j$, and 
$\boldsymbol{\beta}_{j\ell\cdot}=\text{vec}(\beta_{j\ell 0}, \boldsymbol{\beta}_{j\ell 1}, \ldots,\boldsymbol{\beta}_{j\ell p}) \in\mathbb{R}^d$.
Note that in (\ref{eq:mlogit}), 
$\boldsymbol{\beta}_{j\ell i}=\mathbf{0}$ for all $\ell$ if 
$i \notin \Uppi_j^{\mathcal{G}}$. 
Thus, our model indeed defines a joint distribution for $X_1, \ldots, X_p$ which factorizes according to the DAG $\mathcal{G}$.
We choose to use a symmetric form of the multi-logit model
here, as was done in \citet{Zhu04} and \citet{Friedman10}. To make this model
identifiable, we impose the following
constraints on the intercepts
\begin{equation}\label{eq:constraint}
\beta_{j1 0} = 0, \quad j=1,\ldots,p.
\end{equation}
The nonidentifiability of other parameters can be resolved via
regularization as demonstrated by \citet{Friedman10}. The particular
form of regularization we use leads to the following constraints
\begin{equation}\label{eq:constraint2}
\sum_{m=1}^{r_j}\boldsymbol{\beta}_{jm i}=\mathbf{0}, \quad \forall\, i,j=1,\ldots,p.
\end{equation}
Let $\boldsymbol{\beta} =(\boldsymbol{\beta}_{jmi})$, which is a four-way array, 
denote all the parameters.
Given the structure of $\mathcal{G}$, the number of free parameters is
\begin{equation}
N(\boldsymbol{\beta})=\sum_{j=1}^p \left[(r_j-1)+r_j\sum_{i \in  \Uppi_j^{\mathcal{G}}}d_i\right].
\end{equation} 
If we further assume that $r_i = \mathcal{O}(r)$ for all $i$, then
\begin{equation}\label{eq:multiLogitBound}
N(\boldsymbol{\beta}) = \mathcal{O}(r^2)|E| + \mathcal{O}(rp),
\end{equation}
which grows linearly in the total number of edges $|E|$. 
This rate of growth is much
slower than that of the product multinomial model (\ref{eq:multirate}).
Note that these two models are not equivalent.
Numerical comparisons in Section \ref{sec:realNetworks} 
confirm that the proposed multi-logit model often serves as a good approximation to the product multinomial model.

Suppose that we have a data set $\mathcal{X}=(\mathcal{X}_{hi})_{n \times p}$ 
generated from a causal discrete Bayesian network $\mathcal{G}$, where
$\mathcal{X}_{hi}$ is the 
level of $X_i$ in the $h^{\text{th}}$ data point, $h=1,\ldots,n$, coded by dummy variables $\mathbf{x}_{h,i}\in\{0,1\}^{d_i}$.
Let $\mathcal{I}_j$ be the index set of rows where $X_j$ is under intervention and $\mathcal{O}_j=\left\{1,\ldots,n\right\} \backslash \mathcal{I}_j$ be the index set of
rows in which $X_j$ is observational.
Note that $\mathcal{I}_j$ are not necessarily mutually exclusive, which means there can be more than one node under intervention for a data point.
Under the multi-logit model (\ref{eq:mlogit}), the log-likelihood function $\ell(\boldsymbol{\beta})$
can be written according to the factorization (\ref{eq:BNint}) as
\begin{eqnarray}\label{eq:dlogL}
\ell(\boldsymbol{\beta}) 
&=& \sum_{j=1}^p \sum_{h \in \mathcal{O}_j} \log\left[p(\mathcal{X}_{hj} |
\mathbf{x}_{h,i}, i \in \Uppi_j^{\mathcal{G}})\right] \nonumber \\
& = & \sum_{j=1}^p \sum_{h \in \mathcal{O}_j} \left[\sum_{\ell=1}^{r_j}y_{hj\ell}
    \mathbf{x}_h^T\boldsymbol{\beta}_{j\ell \cdot} - \log\left\{\sum_{m=1}^{r_j}\exp(\mathbf{x}_h^T\boldsymbol{\beta}_{jm\cdot})\right\}\right],
\end{eqnarray}
where $y_{hj\ell}=I(\mathcal{X}_{hj}=\ell)$ are indicator
variables and $\bmbeta_{j\ell k}=\bfzr$ for $k \notin \Uppi_j^{\calG}$.

\begin{remark}\label{rem:ObsEInt}
Although we have assumed the availability of experimental data, it is easy to see
that the log-likelihood \eqref{eq:dlogL} applies to observational data as well: If there are no experimental data
for $X_j$, then $\mathcal{O}_j=\{1$ $,\ldots$ $,n\}$ in \eqref{eq:dlogL}.
\end{remark}

\subsection{Group norm penalty}

Define $\boldsymbol{\beta}_{j\cdot i}
=\text{vec}(\boldsymbol{\beta}_{j1i}, \ldots,
\boldsymbol{\beta}_{jr_j i}) \in \mathbb{R}^{d_ir_j}$ to be
the vector of coefficients representing the influence of $X_i$ on
$X_j$ and $\boldsymbol{\beta}_{j\cdot 0} = (\beta_{j10},
\ldots, \beta_{jr_j0}) \in \mathbb{R}^{r_j}$ to be the vector of intercepts for predicting
$X_j$. 
The structure of $\mathcal{G}$
is coded by the sparsity of $\boldsymbol{\beta}_{j\cdot i}$ as
\begin{equation}\label{eq:sld}
\boldsymbol{\beta}_{j\cdot i}=\mathbf{0} \qquad\Longleftrightarrow\qquad 
i \notin \Uppi_j^{\mathcal{G}}.
\end{equation}
In order to learn a sparse DAG from data, we estimate $\boldsymbol{\beta}$ via
a penalized likelihood approach. It can
be seen from (\ref{eq:sld}) that, for discrete Bayesian networks, the
set of parents of $X_j$ is given by the set 
$\{i: \boldsymbol{\beta}_{j\cdot i} \ne \mathbf{0}\}$. 
The regular $\ell_1$ penalty is inappropriate for this
purpose since it penalizes each component of $\boldsymbol{\beta}$
separately. We instead penalize the vector $\boldsymbol{\beta}_{j\cdot i}\in\mathbb{R}^{d_ir_j}$ 
as a whole to obtain a sparse DAG via the use of a group norm penalty.  
Group norm penalties have been used in the group lasso and its generalizations \citep{Yuan06,Meier08}. 
Let $\mathcal{G}_{\boldsymbol{\beta}}$
denote the graph induced by $\boldsymbol{\beta}$ so that
$\Uppi_j^{\mathcal{G}_{\boldsymbol{\beta}}}=\{i: \boldsymbol{\beta}_{j\cdot i} \ne \mathbf{0}\}$ for $j=1,\ldots,p$. 
We define our group norm penalized estimator
for a discrete Bayesian network by the following optimization program:
\begin{eqnarray}\label{eq:dobj1}
f_{\lambda}(\boldsymbol{\beta}) 
&\defi& -\ell(\boldsymbol{\beta})+\lambda\sum_{j=1}^p\sum_{i=1}^p \norm{\boldsymbol{\beta}_{j\cdot i}}_2,\\
\hat{\boldsymbol{\beta}}_{\lambda} \label{eq:argminBeta}
&=& \mathop{\arg\min}_{\boldsymbol{\beta}: \mathcal{G}_{\boldsymbol{\beta}}
  \text{ is a DAG}} f_{\lambda}(\boldsymbol{\beta}),
\end{eqnarray}
where 
$\lambda>0$ is a tuning parameter. 
See Section~\ref{sec:path} for choosing the parameter $\lambda$.  
The feasible set of (\ref{eq:argminBeta}) is a DAG space, which imposes a highly nonconvex constraint.
This is a major challenge for our optimization algorithm.
Hereafter, we call $\bmbeta_{j\cdot i}$ a (component) group of $\bmbeta$.


\section{Algorithm}\label{sec:dCDalgo}

Structure learning for discrete Bayesian networks is computationally
demanding because of the
nonlinear nature of the multi-logit model (\ref{eq:mlogit}). We develop in this section a blockwise coordinate
descent algorithm to solve (\ref{eq:argminBeta}). Coordinate descent algorithms have been proved successful in
various settings \citep{Fu98, Friedman07, Wu08} and their
implementations are relatively straightforward.

\subsection{Single coordinate descent step}\label{sec:Newton}

We first consider minimizing $f_{\lambda}(\boldsymbol{\beta})$ (\ref{eq:dobj1}) with respect to
$\boldsymbol{\beta}_{j\cdot  i}$ while holding all the
other parameters constant. We define
\begin{align}\label{eq:CDstep}
f_{\lambda,j}(\boldsymbol{\beta}_{j\cdot \cdot}) 
&= -\sum_{h \in \mathcal{O}_j} \left[
  \sum_{\ell=1}^{r_j}y_{hj\ell}
    \mathbf{x}_h^T\boldsymbol{\beta}_{j\ell \cdot} -\log\left\{\sum_{m=1}^{r_j}\exp\left(\mathbf{x}_h^T\boldsymbol{\beta}_{jm\cdot}\right)\right\}
  \right] 
  +\lambda\sum_{k=1}^p \norm{\boldsymbol{\beta}_{j\cdot k}}_2 \nonumber \\
& \defi -\ell_j\left(\boldsymbol{\beta}_{j\cdot \cdot}\right)
+\lambda\sum_{k=1}^p \norm{\boldsymbol{\beta}_{j\cdot k}}_2 ,
\end{align}
where $\boldsymbol{\beta}_{j\cdot \cdot}=(\boldsymbol{\beta}_{j\cdot 0},
\boldsymbol{\beta}_{j\cdot 1},\ldots, \boldsymbol{\beta}_{j\cdot p})$. 
Considering the
problem of minimizing $f_{\lambda,j}(\cdot)$ over $\boldsymbol{\beta}_{j\cdot i}$, 
we write $f_{\lambda,j}$ and $\ell_j$ as $f_{\lambda,j}(\boldsymbol{\beta}_{j\cdot i})$ and $\ell_j(\boldsymbol{\beta}_{j\cdot i})$, respectively. 

Following the approach of
\citet{Tseng09} and \citet{Meier08}, we form a
quadratic approximation to $\ell_j(\boldsymbol{\beta}_{j\cdot i})$
using the second-order Taylor expansion at $\boldsymbol{\beta}_{j\cdot i}^{(t)}$,
the current value of $\bmbeta_{j\cdot i}$. 
Adding the penalty, the quadratic approximation is
\begin{eqnarray}\label{eq: Newtonstep}
Q_{\lambda,j}^{(t)}(\boldsymbol{\beta}_{j\cdot i})
 &= & -\left\{ \left(\boldsymbol{\beta}_{j\cdot i}-\boldsymbol{\beta}_{j\cdot
    i}^{(t)}\right)^T\nabla{\ell}_j\left(\boldsymbol{\beta}_{j\cdot i}^{(t)}\right)\right. \nonumber \\
 & &\left. +\dfrac{1}{2}\left(\boldsymbol{\beta}_{j\cdot i}-\boldsymbol{\beta}_{j\cdot
    i}^{(t)}\right)^T\bfH_{ji}^{(t)}\left(\boldsymbol{\beta}_{j\cdot i}-\boldsymbol{\beta}_{j\cdot
    i}^{(t)}\right)\right\} +\lambda \norm{\boldsymbol{\beta}_{j\cdot i}}_2,
\end{eqnarray}
up to an additive term that does not depend on $\bmbeta_{j\cdot i}$. The gradient of the log-likelihood function $\ell_j(\cdot)$ is
\begin{equation}\label{eq:dgradient}
\nabla{\ell}_j(\boldsymbol{\beta}_{j\cdot i}^{(t)})=
\sum_{h \in \mathcal{O}_j}
\begin{pmatrix}
\left(y_{hj1}-p_{j1}^{(t)}(\mathbf{x}_h)\right)\mathbf{x}_{h,i}\\ \vdots\\ \left(y_{hjr_j}-p_{jr_j}^{(t)}(\mathbf{x}_h)\right)\mathbf{x}_{h,i}
\end{pmatrix},
\end{equation}
where $p_{j\ell}^{(t)}(\mathbf{x})$ , $\ell =1,\ldots, r_j$, defined in \eqref{eq:mlogit} are
evaluated at the current parameter values. To give a reasonable quadratic approximation, we use a negative definite matrix
$\bfH_{ji}^{(t)}=h_{ji}^{(t)}\mathbf{I}_{d_ir_j}$ in (\ref{eq: Newtonstep}) to approximate the Hessian of $\ell_j(\cdot)$, 
where the scalar $h_{ji}^{(t)}<0$ and
$\mathbf{I}_{d_ir_j}$ is the identity matrix of size $d_ir_j \times d_i r_j$. 
We choose 
\begin{equation}\label{eq:hessian}
h_{ji}^{(t)} =h_{ji}(\bmbeta^{(t)})\defi -\max\{\text{diag}(-\bfH_{\ell_j}(\boldsymbol{\beta}_{j\cdot i}^{(t)})), b\},
\end{equation}
where $\bfH_{\ell_j}$ is the Hessian of the log-likelihood function $\ell_j(\cdot)$ and $b$ is a small positive number 
used as a lower bound to help convergence. Note that it is not necessary to recompute $h_{ji}^{(t)}$ every iteration (\citealp{Meier08}). See Section~\ref{sec:path} for more details.

It is not difficult to show the following proposition, 
which is a direct consequence of the Karush-Kuhn-Tucker (KKT) conditions for minimizing \eqref{eq: Newtonstep}.
\begin{proposition}\label{prop:Newton}
Let $\bfH_{ji}^{(t)}=h_{ji}^{(t)}\mathbf{I}_{d_ir_j}$ for some scalar
$h_{ji}^{(t)}<0$ and 
$\boldsymbol{d}_{ji}^{(t)}=\nabla{\ell}_j(\boldsymbol{\beta}_{j\cdot i}^{(t)})-h_{ji}^{(t)}\boldsymbol{\beta}_{j\cdot i}^{(t)}$.
Then, the minimizer of
$Q_{\lambda,j}^{(t)}(\boldsymbol{\beta}_{j\cdot i})$ in \eqref{eq: Newtonstep} is
\begin{align}\label{eq:edgeupdate}
\boldsymbol{\bar{\beta}}_{j\cdot i}^{(t)}=\left\{\begin{array}{ll}
\mathbf{0} & \text{ if } \norm{\boldsymbol{d}_{ji}^{(t)}}_2 \leq \lambda, \\
-\dfrac{1}{h_{ji}^{(t)}}\left[
  1 - \dfrac{\lambda}{\norm{\boldsymbol{d}_{ji}^{(t)}}_2} \right] \boldsymbol{d}_{ji}^{(t)}
  & \text{ otherwise.}
\end{array}\right.
\end{align}
\end{proposition}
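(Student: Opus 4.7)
The plan is to derive \eqref{eq:edgeupdate} by writing down and solving the Karush--Kuhn--Tucker (KKT) condition for the strictly convex problem \eqref{eq: Newtonstep}. To simplify notation, I would drop the subscripts and write $\beta$ for $\boldsymbol{\beta}_{j\cdot i}$, $\beta_0$ for $\boldsymbol{\beta}_{j\cdot i}^{(t)}$, $h$ for $h_{ji}^{(t)}$, $g$ for $\nabla\ell_j(\beta_0)$, $d$ for $\boldsymbol{d}_{ji}^{(t)} = g - h\beta_0$, and $w$ for $\lambda w_{ji}$. Expanding $Q_{\lambda,j}^{(t)}$ and discarding terms that do not depend on $\beta$ gives
\begin{equation*}
Q_{\lambda,j}^{(t)}(\beta) \;=\; -\beta^T d \;-\; \tfrac{1}{2}h\,\|\beta\|_2^2 \;+\; w\,\|\beta\|_2 \;+\; \text{const}.
\end{equation*}
Since $h<0$, the quadratic term is strictly convex, so adding the convex norm $w\|\beta\|_2$ yields a strictly convex objective with a unique minimizer characterized by $0\in\partial Q_{\lambda,j}^{(t)}(\beta)$.

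The subdifferential of $\|\cdot\|_2$ at $\beta$ equals $\{\beta/\|\beta\|_2\}$ when $\beta\neq\mathbf{0}$ and equals the closed unit ball $\{v:\|v\|_2\le 1\}$ at $\beta=\mathbf{0}$. Thus the KKT condition reads
\begin{equation*}
0 \;\in\; -d \;-\; h\beta \;+\; w\,\partial\|\beta\|_2.
\end{equation*}
I would then split into two cases. If $\beta=\mathbf{0}$, optimality requires $d\in w\,\partial\|\mathbf{0}\|_2$, equivalently $\|d\|_2\le w$, giving the first branch of \eqref{eq:edgeupdate}. Otherwise, optimality requires $d + h\beta = w\,\beta/\|\beta\|_2$, i.e.\ $d = \bigl(-h + w/\|\beta\|_2\bigr)\beta$, which shows $\beta$ is a positive scalar multiple of $d$ (note $-h + w/\|\beta\|_2 > 0$).

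Taking Euclidean norms on both sides of $d=\bigl(-h+w/\|\beta\|_2\bigr)\beta$ gives the scalar equation $\|d\|_2 = -h\,\|\beta\|_2 + w$, so $\|\beta\|_2 = (\|d\|_2-w)/(-h)$; this is positive precisely when $\|d\|_2 > w$, which is consistent with the case split. Substituting this value of $\|\beta\|_2$ back into $\beta = d/(-h + w/\|\beta\|_2)$ and simplifying yields
\begin{equation*}
\beta \;=\; -\frac{1}{h}\!\left[d \;-\; w\,\frac{d}{\|d\|_2}\right],
\end{equation*}
which is the second branch of \eqref{eq:edgeupdate}. I do not anticipate any real obstacle: strict convexity delivers existence and uniqueness for free, and the only non-trivial calculation is the scalar reduction obtained by taking norms, which turns the vector KKT equation into a one-dimensional equation in $\|\beta\|_2$. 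Finally I would remark that the resulting update is exactly the group soft-thresholding operator applied to $d$, recovering the familiar form used in group-Lasso coordinate descent.
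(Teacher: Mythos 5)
Your proposal is correct and is exactly the argument the paper has in mind: the paper gives no written proof, stating only that the result is ``a direct consequence of the Karush--Kuhn--Tucker conditions,'' and your expansion of $Q_{\lambda,j}^{(t)}$, the subdifferential case split at $\boldsymbol{\beta}_{j\cdot i}=\mathbf{0}$, and the scalar reduction by taking norms fill in precisely those omitted steps. The computation checks out, including the sign analysis using $h_{ji}^{(t)}<0$ that guarantees strict convexity and the positivity of the coefficient $-h+w/\|\beta\|_2$.
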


In order to achieve sufficient descent, an inexact line search by the Armijo rule is performed 
when $\boldsymbol{\bar{\beta}}_{j\cdot i}^{(t)} \neq \boldsymbol{\beta}_{j\cdot i}^{(t)}$, following the procedure in \citet{Meier08}.
Put
$\boldsymbol{s}_{ji}^{(t)} = \boldsymbol{\bar{\beta}}_{j\cdot i}^{(t)} - \boldsymbol{\beta}_{j\cdot i}^{(t)}$,
and let $\Delta^{(t)}$ be the change in $f_{\lambda,j}$ when the log-likelihood 
is linearized at $\bmbeta_{j\cdot i}^{(t)}$, i.e.,
\begin{equation*}
\Delta^{(t)} = - (\boldsymbol{s}_{ji}^{(t)})^T\nabla{\ell}_j(\boldsymbol{\beta}_{j\cdot i}^{(t)}) + \lambda\|\boldsymbol{\bar{\beta}}_{j\cdot i}^{(t)}\|_2 - \lambda \|\boldsymbol{\beta}_{j\cdot i}^{(t)}\|_2.
\end{equation*}
Pick $\eta, \delta \in (0, 1)$ and $\alpha_0 > 0$, and let 
$\alpha^{(t)}$ be the largest value in the sequence $\{\alpha_0\eta^k\}_{k\geq 0}$ such that 
\begin{equation*}
f_{\lambda,j}(\bmbeta_{j\cdot i}^{(t)}+\alpha^{(t)}\boldsymbol{s}_{ji}^{(t)})
\leq f_{\lambda,j}(\bmbeta_{j\cdot i}^{(t)}) + \delta\alpha^{(t)}\Delta^{(t)}.
\end{equation*}
Then set
\begin{equation} \label{eq:rule}
\boldsymbol{\beta}_{j\cdot i}^{(t+1)} = \boldsymbol{\beta}_{j\cdot i}^{(t)}+\alpha^{(t)}\boldsymbol{s}_{ji}^{(t)},
\end{equation}
which completes one iteration for updating $\bmbeta_{j\cdot i}$.
In our implementation, we choose $\eta= 0.5$, $\delta=0.1$ and $\alpha_0=1$ following the suggestion by \cite{Meier08}.

It follows from Proposition~\ref{prop:Newton} with $\lambda=0$ that for the unpenalized intercepts, 
\begin{equation}\label{eq:interceptupdate}
\boldsymbol{\beta}_{j\cdot 0}^{(t+1)} 
= \boldsymbol{\bar{\beta}}_{j\cdot 0}^{(t)}= -\boldsymbol{d}_{j0}^{(t)}/h_{j0}^{(t)}.
\end{equation}
In addition, some of the parameters are always constrained to zero, e.g., 
$\boldsymbol{\beta}_{j\cdot j}$ and $\beta_{j1 0}$ for all $j$.

\subsection{Blockwise coordinate descent}\label{sec:dCDAlgo}

Our CD algorithm consists of two layers of iterations. In the outer loop, we cycle through all pairs of nodes
to update the active set of edges, including their directions. In the inner loop, we only cycle through the
active edge set to update the parameter values.

We first describe the outer loop. Due to the acyclicity constraint in (\ref{eq:argminBeta}), we know \emph{a priori} that
$\boldsymbol{\beta}_{i\cdot j}$ and $\boldsymbol{\beta}_{j\cdot i}$
cannot simultaneously be nonzero for $i\ne j$. This suggests
performing the minimization in blocks, minimizing over
$\{\boldsymbol{\beta}_{i\cdot j},\boldsymbol{\beta}_{j\cdot i}\}$
simultaneously. 
In order to enforce acyclicity, we use a simple heuristic \citep{Fu13}: For each block $\{\boldsymbol{\beta}_{i\cdot j},\boldsymbol{\beta}_{j\cdot i}\}$, we check if adding an edge from $i\to j$ induces a cycle in the estimated DAG. If so, we set $\boldsymbol{\beta}_{j\cdot i}=\bfzr$ and minimize with respect to $\boldsymbol{\beta}_{i\cdot j}$. Alternatively, if the edge $j\to i$ induces a cycle, we set $\boldsymbol{\beta}_{i\cdot j}=\bfzr$ and minimize with respect to $\boldsymbol{\beta}_{j\cdot i}$. If neither edge induces a cycle, we minimize over both parameters simultaneously. The cycle check is implemented by a breath-first search algorithm. 
We outline below (Algorithm~\ref{algo:dCD}) the complete blockwise CD 
algorithm for discrete Bayesian networks. In the algorithm, $\boldsymbol{\beta}_{j\cdot i}
\Leftarrow \bfzr$ is used to indicate that 
$\boldsymbol{\beta}_{j\cdot i}$ must be set to zero due to the acyclicity
constraint given the current estimates of the other parameters. 
Minimization of $f_{\lambda, j}(\cdot)$ with respect to
$\boldsymbol{\beta}_{j\cdot i}$ is done with the single CD step with line search (\ref{eq:rule}).

\begin{algorithm}[ht]
\caption{CD algorithm for estimating discrete Bayesian networks}
\label{algo:dCD}
\begin{algorithmic}[1]
\State Initialize $\boldsymbol{\beta}$ such that
$\mathcal{G}_{\boldsymbol{\beta}}$ is acyclic
       \For {$i = 1, \ldots, p-1$}\label{line:begin}
             \For {$j = i+1, \ldots, p$}
                \If {$\boldsymbol{\beta}_{j\cdot i} \Leftarrow \mathbf{0}$}
                   \State $\boldsymbol{\beta}_{i\cdot j} \leftarrow \mathop{\arg\min}_{
    \boldsymbol{\beta}_{i\cdot j}}f_{\lambda, i}(\cdot)$, \;
  $\boldsymbol{\beta}_{j\cdot i} \leftarrow \mathbf{0}$
                \ElsIf {$\boldsymbol{\beta}_{i\cdot j} \Leftarrow \mathbf{0}$}
                   \State $\boldsymbol{\beta}_{i\cdot j} \leftarrow
                   \mathbf{0}$, \; $\boldsymbol{\beta}_{j\cdot i} \leftarrow \mathop{\arg\min}_{
    \boldsymbol{\beta}_{j\cdot i}}f_{\lambda, j}(\cdot)$ 
                \Else
                   \State $S_1 \leftarrow \mathop{\min}_{
    \boldsymbol{\beta}_{i\cdot j}}f_{\lambda, i}(\cdot) + f_{\lambda,
    j}(\cdot)|_{\boldsymbol{\beta}_{j\cdot i}=\mathbf{0}}$
                   \State $S_2 \leftarrow f_{\lambda,
    i}(\cdot)|_{\boldsymbol{\beta}_{i\cdot j}=\mathbf{0}} + \mathop{\min}_{
    \boldsymbol{\beta}_{j\cdot i}}f_{\lambda, j}(\cdot)$
                   \If {$S_1 \leq S_2$}
                       \State $\boldsymbol{\beta}_{i\cdot j} \leftarrow \mathop{\arg\min}_{
    \boldsymbol{\beta}_{i\cdot j}}f_{\lambda, i}(\cdot)$, \;
  $\boldsymbol{\beta}_{j\cdot i} \leftarrow \mathbf{0}$
                   \Else
                       \State $\boldsymbol{\beta}_{i\cdot j} \leftarrow
                   \mathbf{0}$, \; $\boldsymbol{\beta}_{j\cdot i} \leftarrow \mathop{\arg\min}_{
    \boldsymbol{\beta}_{j\cdot i}}f_{\lambda, j}(\cdot)$
                   \EndIf
                \EndIf
             \EndFor
          \EndFor
\State  Update intercepts $\boldsymbol{\beta}_{j\cdot 0}$
 for $j=1,\ldots,p$ 
\State Inner loop given the active edge set (Algorithm~\ref{algo:linesearch}) \label{line:end}
\State  Repeat step \ref{line:begin} to \ref{line:end} until some stopping criterion is met
\end{algorithmic}
\end{algorithm}

Let $\bmbeta^{(t)}$ denote the parameter value after one cycle of the outer loop 
(after line 19 in Algorithm~\ref{algo:dCD}). 
Denote its active edge set by 
$E^{(t)}=\{(i,j):\bmbeta^{(t)}_{j\cdot i}\ne \mathbf{0}\}$.
The inner loop solves the following problem:
\begin{equation}\label{eq:innerobj}
\min_{\bmbeta} f_{\lambda}(\bmbeta), \;\text{subject to } \supp(\bmbeta)\subset E^{(t)},
\end{equation}
where $f_{\lambda}$ is defined in \eqref{eq:dobj1}. We use $\boldsymbol{\beta}^{(t)}$ as the initial value 
and cycle through $\bmbeta_{j\cdot i}$ for $(i,j)\in E^{(t)}$. In particular, the direction of an edge will not be reversed
but edges may be deleted if their parameters $\bmbeta_{j\cdot i}$ are updated to zero.
See Algorithm~\ref{algo:linesearch} for an outline of the inner loop.

\begin{algorithm}[ht]
\caption{Inner loop}
\label{algo:linesearch}
\begin{algorithmic}[1]
\State Input $E^{(t)}$ and initialize $\bmbeta=\bmbeta^{(t)}$
	\For {$(i, j) \in E^{(t)}$}\label{line:searchbegin}
		\State $\boldsymbol{\beta}_{j\cdot i} \leftarrow \mathop{\arg\min}_{
    \boldsymbol{\beta}_{j\cdot i}}f_{\lambda, j}(\cdot)$
	\EndFor\label{line:searchend}
\State Repeat step \ref{line:searchbegin} to step \ref{line:searchend} until convergence.
\end{algorithmic}
\end{algorithm}

By construction, $E^{(t)}$ satisfies the acyclicity constraint and thus the feasible region
in \eqref{eq:innerobj} is simply a Euclidean space. Since $f_{\lambda}$ itself is convex,
the CD algorithm for the inner loop has nice convergence properties.
In analogy to Proposition 2 in \citet{Meier08}, we arrive at the following convergence result.

\begin{proposition}\label{prop:convergence}
Suppose that the sequence $\{\bmbeta^{(k)}\}$ is generated by the inner loop.
If the matrix $\bfH_{ji}^{(k)}$ is chosen according to \eqref{eq:hessian}, then every limit point of the
sequence $\{\bmbeta^{(k)}\}$ is a minimizer of problem \eqref{eq:innerobj}.
\end{proposition}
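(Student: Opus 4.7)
The plan is to adapt the convergence argument of Proposition 2 in \citet{Meier08} to our blockwise setting, exploiting the fact that the inner loop faces no combinatorial constraint. Specifically, since $E^{(t)}$ is fixed throughout the inner loop and was chosen to be acyclic by construction, the feasible region in \eqref{eq:innerobj} reduces to a Euclidean space of dimension $\sum_{(i,j)\in E^{(t)}} d_i r_j$ (plus the intercepts). On this space, the negative log-likelihood $-\ell(\bmbeta)$ is convex and continuously differentiable with Lipschitz gradient, while the penalty is a finite sum of group norms $\|\bmbeta_{j\cdot i}\|_2$, which is a block-separable continuous convex function over the groups $\{\bmbeta_{j\cdot i}:(i,j)\in E^{(t)}\}$. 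The overall objective $f_{\lambda}$ is therefore convex with a structure suited to the general theory of block coordinate descent on nondifferentiable composite objectives (\citealp{Tseng09}).

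The proof then proceeds in three steps. First, I would establish monotone descent: by the choice in \eqref{eq:hessian}, $\bfH_{ji}^{(t)}=h_{ji}^{(t)}\mathbf{I}$ majorizes the block Hessian of $-\ell_j$ uniformly, so the quadratic model $Q_{\lambda,j}^{(t)}$ upper-bounds the true block objective up to a constant, which implies $\Delta^{(t)}\le 0$ whenever $\bbmbeta^{(t)}_{j\cdot i}\ne\bmbeta^{(t)}_{j\cdot i}$. The Armijo rule \eqref{eq:rule} with $\alpha_0=1$ and $\eta,\delta\in(0,1)$ then terminates in finitely many backtracking steps and delivers strictly decreasing values of $f_{\lambda}$ unless $\bbmbeta^{(t)}_{j\cdot i}=\bmbeta^{(t)}_{j\cdot i}$. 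The intercept updates \eqref{eq:interceptupdate} are exact Newton-like steps that also do not increase $f_\lambda$. Consequently $\{f_\lambda(\bmbeta^{(k)})\}$ is non-increasing and, being bounded below by $0$ if we add a constant (the negative log-likelihood is bounded below on any compact set and the penalty is non-negative), converges.

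Second, I would extract limit points: since $f_\lambda$ is coercive along directions where $\|\bmbeta_{j\cdot i}\|_2\to\infty$ for any penalized group (the penalty dominates the linear gradient growth), the sublevel set $\{\bmbeta: f_\lambda(\bmbeta)\le f_\lambda(\bmbeta^{(0)})\}$, intersected with the support restriction to $E^{(t)}$, is bounded, so $\{\bmbeta^{(k)}\}$ has at least one limit point $\bmbeta^*$. Then, mimicking Meier et al.'s argument, I would show that the differences $\bmbeta^{(k+1)}_{j\cdot i}-\bmbeta^{(k)}_{j\cdot i}\to \mathbf{0}$ along a convergent subsequence (using the sufficient-decrease inequality together with a uniform lower bound on $|h_{ji}^{(k)}|$ provided by the constant $b$ in \eqref{eq:hessian}) and pass to the limit in the KKT characterization of Proposition~\ref{prop:Newton}, obtaining that each block of $\bmbeta^*$ is a minimizer of the block subproblem.

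Finally, I would invoke convexity of $f_\lambda$ together with the block-separability of the nondifferentiable part: once every block-coordinate satisfies its KKT stationarity condition, the full subgradient optimality $\mathbf{0}\in\partial f_\lambda(\bmbeta^*)$ holds on the active set, and $\bmbeta^*$ is a global minimizer of \eqref{eq:innerobj}. The main obstacle I anticipate is the second step: verifying that the accumulation of approximate coordinate minimizers together with the Armijo backtracking yields vanishing increments $\bmbeta^{(k+1)}-\bmbeta^{(k)}\to\mathbf{0}$, because the group-Lasso penalty is nondifferentiable precisely at the points where blocks become exactly zero, so the usual smooth-case argument needs to be replaced by a careful use of the descent lemma for the majorization $Q_{\lambda,j}^{(t)}$ and the lower bound $b$ on $|h_{ji}^{(t)}|$, exactly as done in \citet{Meier08}.
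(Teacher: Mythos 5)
The paper does not actually write out a proof of this proposition: it states the result ``in analogy to Proposition 2 in \citet{Meier08}'', i.e., it appeals to the block coordinate gradient descent convergence theory of \citet{Tseng09} as instantiated by Meier et al. Your outline reconstructs exactly that argument --- sufficient descent via the Armijo rule, vanishing increments along convergent subsequences using the uniform bound $|h_{ji}^{(k)}|\ge b$, and passage from blockwise stationarity to global optimality via convexity and block separability of the penalty over the fixed acyclic support $E^{(t)}$ --- so in substance you are following the same route the paper intends.

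One step of your justification is wrong as stated, though it is repairable. You claim that $\bfH_{ji}^{(t)}=h_{ji}^{(t)}\mathbf{I}$ with $h_{ji}^{(t)}$ given by \eqref{eq:hessian} majorizes the block Hessian of $-\ell_j$, so that $Q_{\lambda,j}^{(t)}$ upper-bounds the block objective, and you derive $\Delta^{(t)}\le 0$ from this. The scalar in \eqref{eq:hessian} is the maximum \emph{diagonal} entry of the negative Hessian (floored at $b$), which in general does not dominate its largest eigenvalue, so no majorization holds; this is precisely why \citet{Meier08} and \citet{Tseng09} insert a line search rather than taking the unit step. Fortunately the descent inequality does not require majorization: since $\bar{\bmbeta}_{j\cdot i}^{(t)}$ minimizes $Q_{\lambda,j}^{(t)}$ and the quadratic part has curvature $h_{ji}^{(t)}<0$, comparing $Q_{\lambda,j}^{(t)}$ at $\bar{\bmbeta}_{j\cdot i}^{(t)}$ and at $\bmbeta_{j\cdot i}^{(t)}$ yields $\Delta^{(t)}\le \tfrac{1}{2} h_{ji}^{(t)}\|\boldsymbol{s}_{ji}^{(t)}\|_2^2<0$ whenever $\boldsymbol{s}_{ji}^{(t)}\ne\mathbf{0}$, which is exactly what drives the sufficient-decrease and vanishing-increment steps. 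With that substitution your proof goes through; the only remaining caveat is that boundedness of the sublevel set (hence existence of limit points) can fail in the unpenalized intercept directions for degenerate data, but since the proposition only asserts a property of limit points, this does not affect its validity.
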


However, since the search space for the outer loop is the full DAG space, which is highly nonconvex, 
rigorous theory on its convergence is yet to be established. 
Therefore, a practical stopping criterion is employed. After the convergence of an inner loop,
we obtain the current active set. If one more iteration of the outer loop does not change the active set,
we then stop Algorithm \ref{algo:dCD}. On the other hand, we also set a maximum number of iterations
for the outer loop.
For all the examples we have tested, 
our CD algorithm has shown no problem in convergence. 
This empirical observation is in line with recent theoretical work by 
\cite{lee2016gradient}
who have established that gradient descent converges to a local minimizer of 
a nonconvex objective function for almost all initial values and have suggested similar behavior for coordinate descent.

\subsection{Solution path}\label{sec:path}

We use Algorithm~\ref{algo:dCD} to compute
$\hat{\boldsymbol{\beta}}_{\lambda}$ (\ref{eq:argminBeta}) over a grid of
$J$ values for the tuning parameter, $\lambda_1 > \ldots > \lambda_J  > 0$, where at
$\lambda_1$ every parameter other than the intercepts is estimated as
zero. It follows from the KKT conditions for (\ref{eq:dobj1}) that
\begin{eqnarray}\label{eq:dlambda_max}
\lambda_1 
&=& \max_{1\leq i,j \leq p}
\norm{\nabla{\ell}_j(\boldsymbol{\beta}_{j\cdot
    i}) |_{\boldsymbol{\beta}_{j\cdot
    i}=\mathbf{0}}}_2,
\end{eqnarray}
in which $\boldsymbol{\beta}_{j\cdot 0}$ is set to the MLE of the intercept assuming 
all $\boldsymbol{\beta}_{j\cdot i}$, $i=1,\ldots,p$, are zero.

The solution $\hat{\boldsymbol{\beta}}_{\lambda_m}$ is used as a warm start for
estimating $\hat{\boldsymbol{\beta}}_{\lambda_{m+1}}$, $m=1, \ldots,
J-1$. To save computational time, we set $h_{ji}^{(t)}=h_{ji}(\hat{\bmbeta}_{\lambda_{m}})$ \eqref{eq:hessian}
in the CD algorithm for $\hat{\bmbeta}_{\lambda_{m+1}}$, instead of updating $h_{ji}^{(t)}$ every iteration.

Traditional model selection criteria such as
BIC do not work well for the purpose
of estimating DAGs from data.
In our simulation results, the hill-climbing (HC) algorithm \citep{gamez2011learning}, which uses BIC as the scoring function, always selects too many edges. There are also numerical studies \citep{scutari2016empirical} in which BIC tends to select too few edges on a different set of DAGs, showing that BIC could be sensitive and unstable. In order to select a suitable tuning
parameter, we use an empirical model selection criterion 
proposed by \citet{Fu13}. Let $\hat{\mathcal{G}}_{\lambda_m}$ be the DAG induced by
$\hat{\boldsymbol{\beta}}_{\lambda_m}$ and $e_{\lambda_m}$ be the
number of edges in $\hat{\mathcal{G}}_{\lambda_m}$. We reestimate 
$\bmbeta$ by the maximizer
$\bmbeta^{\dag}_{\lambda_m}$ of the log-likelihood $\ell(\bmbeta)$ \eqref{eq:dlogL} 
given $\calG=\hat{\mathcal{G}}_{\lambda_m}$ using the R package \texttt{nnet} \citep{nnet2002}. We define the difference
ratio between two estimated DAGs $\hat{\mathcal{G}}_{\lambda_m}$ and 
$\hat{\mathcal{G}}_{\lambda_{m+1}}$ by
$dr_{(m,m+1)}=\Delta \ell_{(m, m+1)} / \Delta e_{(m,m+1)}$, where $\Delta
\ell_{(m, m+1)} =
\ell(\bmbeta^{\dag}_{\lambda_{m+1}})-\ell(\bmbeta^{\dag}_{\lambda_m})$
and $\Delta e_{(m,m+1)} = e_{\lambda_{m+1}}$ $- e_{\lambda_m} $, if 
$\Delta e_{(m,m+1)}\geq 1$. 
Otherwise, we set $dr_{(m,m+1)}$ $= dr_{(m-1,m+1)}$. The
selected tuning parameter is indexed by
\begin{eqnarray}\label{eq:dempiricalrule}
m^* = \sup \left\{2\leq m \leq J: dr_{(m-1,m)} \geq \alpha \cdot \max \{dr_{(1,2)}, \ldots,
    dr_{(J-1,J)}\} \right\}.
\end{eqnarray}
According to this criterion, an increase in model complexity, measured by the number of predicted edges, 
is accepted only if there is a substantial increase in the log-likelihood. We choose $\alpha=0.3$ for all the results in this work.

\section{Simulation Studies}\label{sec:dsimul}

We evaluate the CD algorithm on simulated data sets. As stated in Remark \ref{rem:ObsEInt}, the log-likelihood \eqref{eq:dlogL} applies to observational data as well. Therefore, we apply the CD algorithm on both interventional data and observational data. In order to assess the accuracy and efficiency of the CD algorithm, we compare it with a few competing methods. For interventional data, we compare our CD algorithm with the PC algorithm \citep{kalisch2007estimating}, the greedy interventional equivalent search (GIES) algorithm \citep{hauser2015jointly} and the equi-energy sampler (EE sampler) \citep{kou2006discussion}.
For observational data we compare the CD algorithm with the hill-climbing (HC) algorithm \citep{gamez2011learning}, 
the max-min hill-climbing (MMHC) algorithm \citep{tsamardinos2006max}, and the PC algorithm.
Among these competitors, the PC algorithm is a constraint-based method, the MMHC is a hybrid method
and the others are all score-based.

Details about data generation and parameter choices will be discussed in Section~\ref{sec:setup}. 
In Section~\ref{sec:ExpwInt}, we compare DAGs estimated from interventional data. Section~\ref{sec:ExpwObs}, on the other hand, presents results on high-dimensional observational data.
The comparison of running times is provided in Section~\ref{sec:timing}. 

\subsection{Experimental setup} \label{sec:setup}

Four types of networks are used to compare the methods: the bipartite graph, 
the scale-free network, the small-world network, and random DAGs. 
In each setting, we consider the combination of three main parameters: $(n, p, s_0)$, 
where $n$ is the sample size, $p$ is the number of nodes, and $s_0$ is the number of true edges. 

We generated bipartite graphs, scale-free networks, and small-world networks 
with the R package \texttt{igraph} \citep{Csardi06}. 
The bipartite graphs were generated by the Erd\H{o}s-R\'{e}nyi model \citep{renyi1959random}.
Each bipartite graph in our datasets had $0.2p$ top nodes, $0.8p$ bottom nodes, and $s_0 = p$ directed edges
from the top to the bottom nodes. The structure of a
scale-free network was generated using the
Barab\'{a}si-Albert model \citep{Barabasi99}. These networks had $s_0 = p-1$
directed edges. The small-world networks were generated
by the Watts-Strogatz model \citep{Watts98}. A graph initially generated
by the model was undirected. To convert it to
a DAG, edge directions were chosen according to a randomly
generated topological sort. 
In this way, a small-world network had $s_0 =2p$ directed edges. 
Random DAGs were sampled using the R package \texttt{pcalg} \citep{kalisch2012causal}, 
and each DAG had $s_0 \approx p$ edges.

In all the simulation studies, each variable was assumed
to be binary, i.e., $r_j=2$ for all $j$. In this case, each group of parameters 
$\bmbeta_{j\cdot i}=(\beta_{j1i},\beta_{j2i}) \in \mathbb{R}^2$.
If $\Uppi_j = \varnothing$,
$X_j$ would be sampled from its two levels with equal
probability. Otherwise, the parameters $\bmbeta_{j\cdot 0}$ and $\bmbeta_{j\cdot i}$, $i \in \Uppi_j$, 
were chosen such that
\begin{equation}\label{eq:LogitPara}
p_{j\ell}(\mathbf{x}_h) = \dfrac{\exp(2\sum_{i \in \Uppi_{j}}
  y_{hi\ell})}{\exp(2\sum_{i \in \Uppi_{j}} y_{hi1}) +
\exp(2\sum_{i \in \Uppi_{j}} y_{hi2})}
\nonumber \\
\end{equation}
for $\ell=1,2$, where $y_{hi\ell}=I(\mathcal{X}_{hi}=\ell)$.
The value of a variable under intervention was randomly fixed to one of its levels regardless of 
its parents.

For each dataset, we input to our CD algorithm a sequence of 40 values of $\lambda$, starting from $\lambda_1$~\eqref{eq:dlambda_max} and ending at $0.01\lambda_1$. Since we assume the graphs are sparse, we stop a solution path when the number of predicted edges exceeds $3p$. Consequently, a sequence of DAGs is estimated and one
of them will be picked by our model selection criterion~\eqref{eq:dempiricalrule} with $\alpha = 0.3$.
To avoid any potential bias in the estimation, we pick a random order to cycle through all the blocks in the outer loop of Algorithm \ref{algo:dCD}.

The EE sampler is used for comparisons on interventional data. 
Its implementation for DAG estimation was done as in \cite{Zhou11}.
We will call it the EE-DAG sampler hereafter.
In each run, we simulate $10$ chains with a $0.1$ chance of equi-energy jumps. 
To obtain an estimated DAG, we threshold the average graph from the target chain (the $10^{\text{th}}$ chain). 
More precisely, a directed edge will be predicted if its posterior inclusion probability is greater than $0.5$.

The HC algorithm is a standard greedy method, while the MMHC algorithm is a hybrid method. For these two algorithms, we use the R package \texttt{bnlearn} \citep{scutari2010learning, Scutari2017learnining}. 
These methods are designed specifically for observational data and thus are compared with our method only on observational data. For the HC algorithm, since the number of predicted edges is too large, we set the maximum number of parents for each node to be 2. 
For the MMHC algorithm we also have the option to limit the number of parents per node. But since this algorithm can estimate a reasonable number of edges, we did not set an upper limit.

The PC algorithm is a popular constraint-based algorithm for learning Bayesian networks, with an efficient 
implementation in the R package \texttt{pcalg} \citep{kalisch2012causal, hauser2012learning}. 
However, it may not produce a DAG for every data set, and instead its output is a completed partially directed acyclic graph (CPDAG), which contains both directed and undirected edges. To make a fair comparison, we distinguish undirected edges from directed ones in our calculation of various performance metrics, with details provided later. The tuning parameter for the PC algorithm is the significance level for conditional independence tests, which is chosen as $\alpha = 0.01$ for all data. 

The GIES algorithm can learn Bayesian networks from a mixture of observational and interventional data,
by searching over the so-called interventional equivalence classes.
However, the implementation of this algorithm in the \texttt{pcalg} package, the only implementation we found, can only take continuous data as input. So we generated continuous data from the simulated discrete data. The detailed procedure is described in the Supplemental Material.

\subsection{Results for interventional data}\label{sec:ExpwInt}

For each type of network, we generated graphs with $p=50$ and $p=100$.  For each node $X_j$, we generated $n_j$ data points where the node is under intervention, so that the sample size $n=\sum_{j=1}^p n_j$ for interventional data. We chose $n_j\in\{1,5\}$ for all $j = 1, \ldots, p$ to
test the performance of the algorithms given different amount of intervention. In particular, when $n_j=1$
we have $n=p$, which lies on the boundary between low- and high-dimensional settings.
In combination, our choices of the data size were $(n, p)$ $\in$ $\{(50, 50)$, $(250, 50)$, $(100, 100)$, $(500, 100)\}$. 
 For each combination of $(n,p)$, we generated 20 data sets.

We compare the DAGs estimated by four algorithms, the CD algorithm, the EE-DAG sampler, 
 the PC algorithm, and the GIES algorithm. 
For an estimated DAG, we distinguish between expected edges, which are estimated edges in the true skeleton with the correct direction, and reversed edges, which are in the true skeleton but with a reversed direction. 
Let P, E, R, and FP denote, respectively, the numbers of predicted edges,
expected edges, reversed edges, and false
positive edges (excluding the reversed ones) in an estimated DAG, and recall that $s_0$ is
the number of edges in the true graph. Then the number of missing edges is $\text{M}=s_0-\text{E}-\text{R}$.
The accuracy of DAG estimation is measured
by the true positive rate (TPR), the false discovery rate (FDR), the structural Hamming distance (SHD) \citep{tsamardinos2006max}, and the Jaccard index (JI), defined as
$\text{TPR} = \text{E}/s_0$,
$\text{FDR} =(\text{R}+\text{FP})/\text{P}$, 
$\text{SHD} =(\text{R}+\text{M}+\text{FP})$, and
$\text{JI} = \text{E}/(\text{P}+s_0-\text{E})$.
Note that SHD was originally defined for CPDAGs, and our definition here measures the Hamming distance between two DAGs.
Both SHD and JI are single performance metrics for DAG estimation. We mark in boldface results with the optimum SHD and JI scores in the subsequent tables.

\begin{table*}[t]
\begin{center}
\caption{Comparison between our CD algorithm and the PC algorithm on simulated interventional data}
\label{tab:cmpPC}       
\resizebox{\linewidth}{!}{%
\begin{tabular}{lclrrrrrrcc}
\hline\noalign{\smallskip}
Graph & $(n, p, s_0)$ & Method & P & E & R & FP & TPR & FDR &  SHD & JI \\
\noalign{\smallskip}\hline\noalign{\smallskip}
Bipartite & {$(100, 100, 100)$} & CD & 98.2 & 63.0 & 18.6 & 16.5 & 0.630 & 0.355 & \bf{53.5} & \bf{0.466} \\
 & & CD* & 60.5 & 41.2 & 14.5 & 4.7 & 0.412 & 0.316 & 63.5 & 0.345 \\
 & & PC & 50.0 & 5.6(18.3) & 21.9 & 4.2 & 0.239 & 0.085 & (80.3,  98.7) & (0.039,  0.191) \\
& {$(500, 100, 100)$} & CD & 104.2 & 81.7 & 17.1 & 5.5 & 0.816 & 0.217 & \bf{23.8} & \bf{0.666} \\
 & & CD* & 86.3 & 68.8 & 15.8 & 1.7 & 0.688 & 0.203 & 32.9 & 0.586 \\
 & & PC & 80.5 & 29.2(27.1) & 20.5 & 3.8 & 0.562 & 0.047 & (47.5, 74.6) & (0.193, 0.454) \\
\noalign{\smallskip}\hline\noalign{\smallskip}
Scale-free & {$(100, 100, 99)$} & CD & 99.2 & 74.8 & 17.9 & 6.5 & 0.756 & 0.245 & \bf{30.8} & 0.610 \\
 & & CD* & 103.3 & 76.3 & 18.1 & 8.8 & 0.771 & 0.260 & 31.5 & \bf{0.611} \\
 & & PC & 59.1 & 4.9(49.5) & 2.8 & 1.9 & 0.549 & 0.032 & (46.5, 96.0) & (0.032, 0.525) \\
& {$(500, 100, 99)$} & CD & 98.8 & 85.0 & 13.4 & 0.4 & 0.859 & 0.140 & \bf{14.5} & \bf{0.758} \\
 & & CD* & 105.2 & 85.5 & 13.6 & 6.2 & 0.863 & 0.186 & 19.8 & 0.726 \\
 & & PC & 74.0 & 1.2(71.0) & 0.0 & 1.8 & 0.729 & 0.023 & (28.6, 99.5) & (0.007, 0.717) \\
\noalign{\smallskip}\hline\noalign{\smallskip}
Small-world & {$(100, 100, 200)$} & CD & 145.2 & 69.1 & 49.1 & 26.9 & 0.346 & 0.523 & 157.9 & 0.250 \\
 & & CD* & 121.1 & 59.1 & 42.7 & 19.2 & 0.296 & 0.511 & 160.1 & 0.226 \\
 & & PC & 67.7 & 11.7(45.4) & 7.7 & 3.0 & 0.285 & 0.045 & ({\bf{146.0}}, 191.3) & (0.046, {\bf{0.271}}) \\
& {$(500, 100, 200)$} & CD & 168.8 & 98.8 & 53.0 & 17.0 & 0.494 & 0.412 & \bf{118.2} & \bf{0.367} \\
 & & CD* & 135.1 & 82.0 & 46.6 & 6.5 & 0.410 & 0.393 & 124.5 & 0.324 \\
 & & PC & 117.0 & 43.0(26.2) & 46.2 & 1.5 & 0.346 & 0.013 & (132.2, 158.4) & (0.158, 0.283) \\
\noalign{\smallskip}\hline\noalign{\smallskip}
Random DAG & {$(100, 100, 101.5)$} & CD & 91.8 & 56.6 & 24.7 & 10.4 & 0.556 & 0.384 & 55.4 & 0.414 \\
 & & CD* & 61.0 & 38.5 & 18.9 & 3.5 & 0.381 & 0.365 & 66.5 & 0.311 \\
 & & PC & 66.2 & 22.4(28.7) & 12.5 & 2.6 & 0.506 & 0.040 & ({\bf{53.0}}, 81.8) & (0.156, {\bf{0.441}}) \\
& {$(500, 100, 102.0)$} & CD & 103.4 & 76.2 & 23.4 & 3.8 & 0.746 & 0.263 & 29.8 & 0.591 \\
 & & CD* & 85.8 & 64.2 & 19.8 & 1.8 & 0.636 & 0.253 & 39.6 & 0.524 \\
 & & PC & 96.7 & 60.5(24.2) & 10.6 & 1.3 & 0.837 & 0.014 & ({\bf{18.6}}, 42.9) & (0.438, {\bf{0.755}}) \\
\noalign{\smallskip}\hline
\end{tabular}%
}
\end{center}
CD is the result of our CD algorithm with the smallest SHD along the solution path;
CD* is the result of our CD algorithm using our model selection criterion; 
The number in parentheses in column E for the PC algorithm reports the number of 
predicted undirected edges in the true skeleton 
\end{table*}

\begin{table*}[ht]
\begin{center}
\caption{Comparison between our CD algorithm and the EE-DAG sampler on simulated interventional data}
\label{tab:cmpEE}       
{\renewcommand{\arraystretch}{1}
\resizebox{\linewidth}{!}{%
\begin{tabular}{lclrrrrrrrr}
\hline\noalign{\smallskip}
Graph & $(n, p, s_0)$ & Method & P & E & R & FP & TPR & FDR &  SHD & JI \\
\noalign{\smallskip}\hline\noalign{\smallskip}
Bipartite &$(50, 50, 50)$ & CD & 30.8 & 19.4 & 6.5 & 4.8 & 0.389 & 0.362 & \bf{35.4} & 0.316 \\
 & & CD* & 29.6 & 17.8 & 6.5 & 5.3 & 0.355 & 0.398 & 37.6 & 0.286 \\
 & & EE & 53.9 & 27.4 & 4.8 & 21.6 & 0.548 & 0.486 & 44.2 & \bf{0.362} \\
\noalign{\smallskip}\hline\noalign{\smallskip}
Scale-free &$(50, 50, 49)$ & CD & 48.5 & 30.2 & 6.8 & 11.4 & 0.617 & 0.376 & \bf{30.1} & 0.452 \\
 & & CD* & 51.0 & 31.0 & 6.8 & 13.2 & 0.633 & 0.387 & 31.2 & \bf{0.454} \\
& & EE & 60.6 & 32.9 & 3.1 & 24.6 & 0.670 & 0.453 & 40.8 & 0.434 \\
\noalign{\smallskip}\hline\noalign{\smallskip}
Small-world &$(50, 50, 100)$ & CD & 70.2 & 35.0 & 25.9 & 9.2 & 0.350 & 0.498 & \bf{74.2} & \bf{0.260} \\
 & & CD* & 39.5 & 21.2 & 15.8 & 2.5 & 0.212 & 0.458 & 81.2 & 0.181 \\
& & EE & 62.4 & 30.6 & 16.1 & 15.7 & 0.306 & 0.507 & 85.0 & 0.234 \\
\noalign{\smallskip}\hline\noalign{\smallskip}
Random DAG &$(50, 50, 46.8)$ & CD & 34.1 & 20.4 & 8.3 & 5.5 & 0.441 & 0.400 & \bf{31.9} & 0.340 \\
 & & CD* & 26.1 & 16.4 & 6.8 & 2.9 & 0.361 & 0.363 & 33.2 & 0.296 \\
& & EE & 50.0 & 25.2 & 7.0 & 17.8 & 0.547 & 0.492 & 39.3 & \bf{0.358} \\
\noalign{\smallskip}\hline
\end{tabular}
}}
\end{center}
CD is the result of our CD algorithm with the smallest SHD along the solution path;
CD* is the result of our CD algorithm using our model selection criterion 
\end{table*}

Results reported in Table~\ref{tab:cmpPC} are the comparisons between our CD algorithm and the PC algorithm,  while in Table~\ref{tab:cmpEE} are the comparisons between our CD algorithm and the EE-DAG sampler. 
Results are averages over 20 data sets for each setting $(n, p, s_0)$. 
For our CD algorithm, we report two results for each setting, (i) result with the smallest SHD along the solution path; (ii) result using our model selection criterion (\ref{eq:dempiricalrule}) with $\alpha = 0.3$.
In Table~\ref{tab:cmpPC}, in order to make a clear comparison, we report lower and upper bounds of the SHD and the Jaccard index for CPDAGs estimated by the PC algorithm. Counting all undirected edges in a CPDAG that are in the true skeleton as expected edges, we will have a lower bound for the SHD and an upper bound for the  Jaccard index. Counting these undirected edges as reversed edges will give us an upper bound for the SHD and a lower bound for the Jaccard index.

It is obvious from Table~\ref{tab:cmpPC} that in majority of the cases our CD algorithm outperformed the PC algorithm. The SHD score is smaller than the lower bound of the PC algorithm, and the Jaccard index is higher than the upper bound of the PC algorithm. Only for random DAGs and small-world networks with $n=100$, the SHD of our CD algorithm is slightly higher than the lower bound of the SHD while the Jaccard index is slightly lower than the upper bound of the PC algorithm, 
showing that our algorithm was close to the best performance one could hope for the PC algorithm.
Note that when calculating the TPRs and FDRs in the table, the undirected edges are counted as expected ones, which clearly favors the PC algorithm.

We were only able to test the EE-DAG sampler on small graphs with $p = 50$ because its computing time was too long for graphs with $p = 100$. For the cases of $n =p=50$ (Table~\ref{tab:cmpEE}), we see that the smallest SHD our CD algorithm can achieve along a solution path is 20\% lower than the EE-DAG sampler for bipartite graphs, random DAGs and scale-free networks, and 10\% lower for small-world networks. The EE-DAG sampler predicted much more false positive edges (FP) but slightly fewer reversed edges (R) than our CD algorithm. For the cases of $n = 5p$, we report our comparisons in the Supplemental Material. The EE-DAG sampler had an outstanding performance, with a lower SHD and a higher Jaccard index. These observations are largely in agreement with our expectation. The EE-DAG sampler searches for
DAGs by sampling from a posterior distribution under the product multinomial model with a conjugate Dirichlet 
prior, which is close to $\ell_0$ regularization when $n$ is large.
Thus, it is expected to have good performance when $p$ is small and $n$ is large. However, the search is combinatorial in nature, which makes it impractical for even moderately large networks (such as the graphs with $p=100$ here).
On the contrary, our CD algorithm showed no problem in estimating graphs with hundreds of nodes and
can obtain comparable or better results in the cases of $n=p=50$.

We also did a comparison between the CD algorithm and the GIES algorithm, reported in the Supplemental Material. The results seem to suggest that the GIES algorithm often selects too many edges. When $n=p=100$, the number of edges the GIES algorithm predicted was around $3s_0$ in most of the cases. Consequently, it showed a much higher FDR as well as a larger SHD. 
Recall that the available package for the GIES algorithm can only take continuous data, which were generated by taking a transformation of the simulated discrete data. As a result, this comparison could be confounded by the use of different although related data sets, and thus is only intended to illustrate how the two algorithms would work. 

\begin{figure*}[t]
\centering
\includegraphics[width=0.75\linewidth]{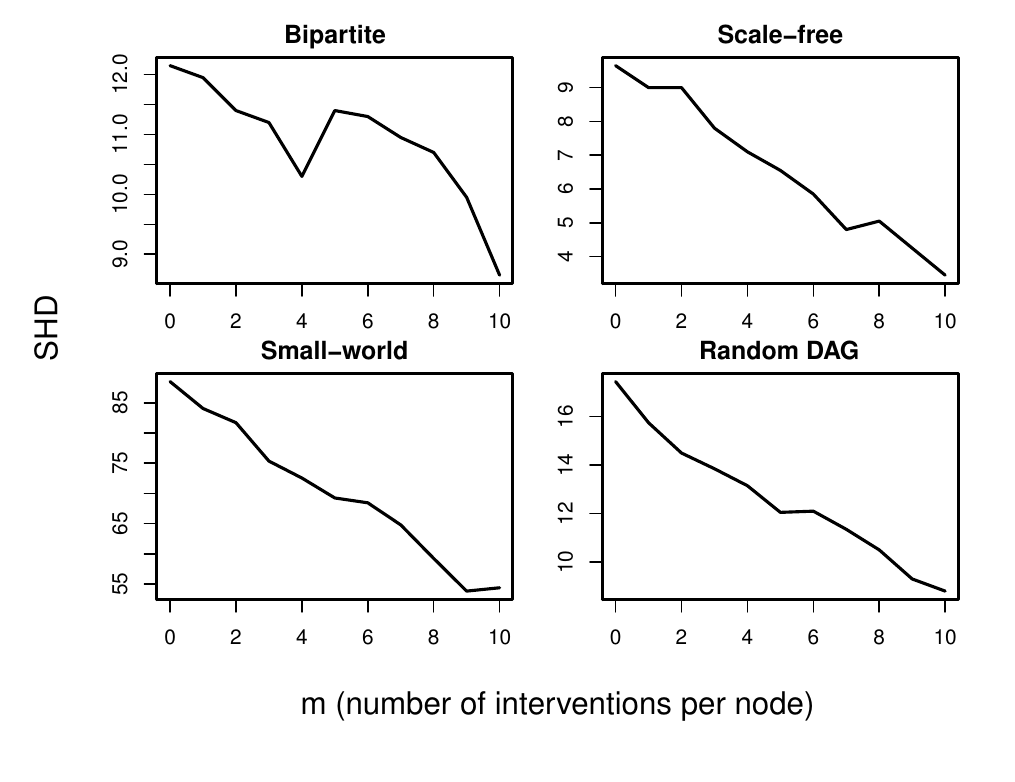}
\caption{The effect of interventions in terms of the SHD, where each node has $m$ interventional data points while the total sample size $n$ is fixed}
\label{fig:intervene}       
\end{figure*}

In order to show how interventional data improve the accuracy, we did more experiments. Figure~\ref{fig:intervene} shows how the SHD decreases when adding intervention to an observational data set, for all four types of graphs with $n=500$ and $p=50$. We started with a purely observational data set, and replaced $m$ observational data points by $m$ interventional data points for each node, for $m=1, 2, ..., 10$. The sample size was fixed as $n=500$. Therefore, we would finally have a data set with 10 interventions for each node. Figure~\ref{fig:intervene} shows the average of 20 experiments, with a very clear downward trend in all plots. The curve for the bipartite graph is not as smooth as the curves for the other types of networks. This is because the improvement for bipartite graphs is not as significant as the other networks. 

\subsection{Results for high-dimensional observational data}\label{sec:ExpwObs}

In this section, we apply our CD algorithm to high-dimensional observational data and compare its performance with the PC algorithm, the MMHC algorithm, and the HC algorithm. 

Metrics for estimation accuracy in this section are modified from those for interventional data. Since equivalent DAGs cannot be distinguished with observational data, we define reversed edges with regard to CPDAGs. A CPDAG is a partially directed graph that has all compelled (directed) edges in the equivalent class of a DAG. We calculate CPDAGs for both an estimated DAG and the true DAG. A reversed edge (R) refers to a predicted edge that satisfies the following two conditions: 
i) Its direction in the estimated DAG is wrong compared to the true DAG. ii) The direction of this edge is inconsistent between the CPDAGs of the true and estimated DAGs, including the case where the edge is directed in one CPDAG but undirected in the other.
Likewise, we define reversed edges in a CPDAG predicted by the PC algorithm as edges in the true skeleton that have an inconsistent direction with the true CPDAG.
The number of expected edges (E) is the number of estimated edges in the true skeleton excluding those reversed ones.  

For high-dimensional data, we generated graphs with $p = 200$ for each type of the networks. 
We chose $n = 50$ and the number of true edges $s_0$ ranged from 190 to 400 for these graphs.
Again, 20 data sets were generated for each combination of $(n, p)$. 

\begin{table*}[t]
\begin{center}
\caption{Comparison among our CD algorithm and other algorithms on simulated observational data}
\label{tab:highDim}       
{\renewcommand{\arraystretch}{1}
\resizebox{\linewidth}{!}{%
\begin{tabular}{lclrrrrccrr}
\hline\noalign{\smallskip}
 Network & $(n, p, s_0)$ & Method & P & E & R & FP & TPR & FDR & SHD & JI \\
\noalign{\smallskip}\hline\noalign{\smallskip}
Bipartite & {$(50, 200, 200.0)$} & CD & 108.7 & 69.6 & 20.6 & 18.6 & 0.348 & 0.357 & \bf{148.9} & \bf{0.290} \\
& & CD* & 90.2 & 59.6 & 17.8 & 12.8 & 0.298 & 0.333 & 153.2 & 0.258 \\
 & & PC & 75.7 & 26.9 & 34.2 & 14.6 & 0.134 & 0.643 & 187.7 & 0.108 \\
& & MMHC & 175.4 & 72.2 & 20.4 & 82.8 & 0.361 & 0.588 & 210.6 & 0.239 \\
 & & HC & 378.1 & 111.5 & 32.9 & 233.8 & 0.557 & 0.705 & 322.4 & 0.239 \\
\hline
Scale-free & {$(50, 200, 199.0)$} & CD & 139.6 & 83.8 & 15.8 & 39.9 & 0.421 & 0.402 & \bf{155.1} & 0.326 \\
 & & CD* & 201.8 & 106.8 & 21.6 & 73.4 & 0.537 & 0.470 & 165.6 & \bf{0.365} \\
 & & PC & 99.5 & 46.5 & 23.1 & 30.0 & 0.234 & 0.532 & 182.5 & 0.185 \\
& & MMHC & 176.8 & 93.2 & 16.1 & 67.5 & 0.468 & 0.472 & 173.2 & 0.330 \\
 & & HC & 377.8 & 121.0 & 28.1 & 228.8 & 0.608 & 0.680 & 306.9 & 0.266 \\
\hline 
Small-world & {$(50, 200, 400.0)$} & CD & 88.2 & 28.9 & 36.2 & 23.0 & 0.072 & 0.533 & \bf{394.1} & 0.058 \\
 & & CD* & 296.5 & 84.6 & 110.8 & 101.1 & 0.212 & 0.714 & 416.5 & 0.138 \\
 & & PC & 70.2 & 7.3 & 54.4 & 8.6 & 0.018 & 0.898 & 401.2 & 0.016 \\
& & MMHC & 249.2 & 84.3 & 85.7 & 79.3 & 0.211 & 0.661 & 395.0 & \bf{0.150} \\
  & & HC & 357.3 & 87.9 & 97.5 & 171.9 & 0.220 & 0.754 & 484.1 & 0.131 \\
\hline
Random DAG & {$(50, 200, 203.6)$} & CD & 113.0 & 68.8 & 28.1 & 16.1 & 0.339 & 0.386 & \bf{150.9} & 0.278 \\
& & CD* & 101.2 & 63.7 & 25.1 & 12.4 & 0.315 & 0.364 & 152.3 & 0.265 \\
 & & PC & 97.3 & 47.1 & 37.0 & 13.2 & 0.233 & 0.515 & 169.7 & 0.187 \\
& & MMHC & 179.8 & 86.4 & 31.1 & 62.4 & 0.427 & 0.520 & 179.6 & \bf{0.292} \\
 & & HC & 376.3 & 96.3 & 52.1 & 227.8 & 0.475 & 0.744 & 335.1 & 0.200 \\
\noalign{\smallskip}\hline
\end{tabular}
}}
\end{center}
CD* is the result of our CD algorithm using our model selection criterion 
\end{table*}

Table \ref{tab:highDim} summarizes the comparison results. 
One sees that our model selection criterion works quite well: 
The SHDs of CD and CD* in Table~\ref{tab:highDim} are very close. 
Our CD algorithm has the lowest SHD for all networks, and the Jaccard index is also quite high compared to other algorithms for most of the networks. 
When running the HC algorithm with the default setting, it tends to predict too many edges, which makes the comparison meaningless. For example, for a scale-free network with $p = 200$, the HC algorithm
predicted more than 2,000 edges.
Therefore, we set the maximum number of parents for each node to be 2 for the HC algorithm. 
However, it still predicted too many edges that it had a much higher TPR as well as a higher FDR and larger SHD than all the other algorithms. The PC algorithm predicted too few edges in scale-free networks and small-world networks, which led to a substantially lower TPRs and JIs, and the FDRs were quite high for bipartite networks. 
The MMHC algorithm, on the other hand, predicted a comparable number of edges as the true DAG in most cases. Its performance was in between of our CD algorithm and the PC algorithm. 
Our CD algorithm presents a clear advantage over all other algorithms in these high-dimensional cases.

\begin{figure*}[t]
\centering
 \includegraphics[width=0.8\linewidth]{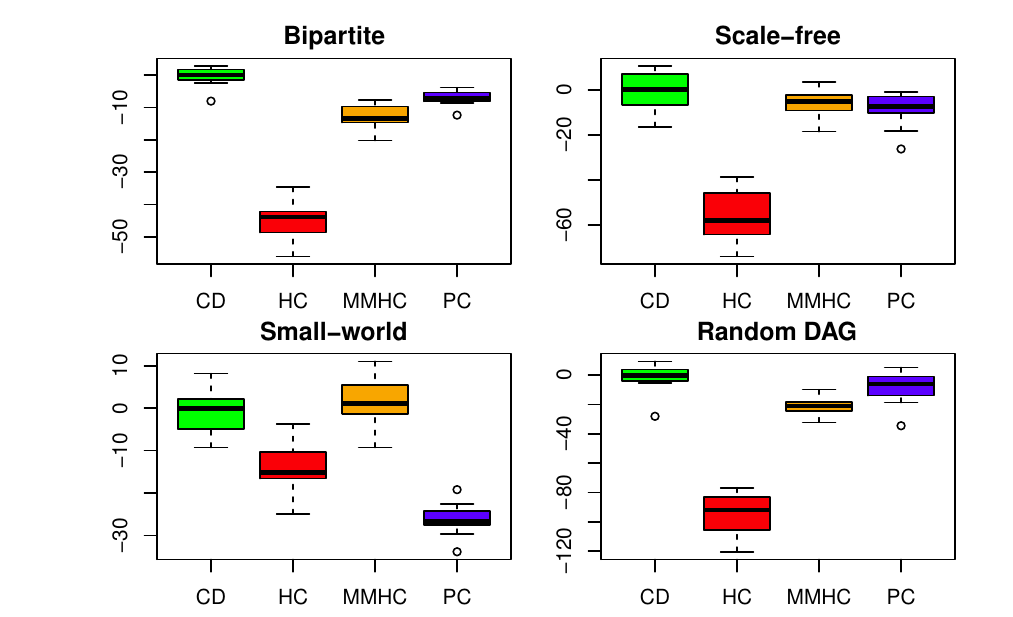}
\caption{Box-plot of test data log-likelihood for four algorithms with log-likelihood scaled by the sample size $n=50$ }
\label{fig:boxplot}       
\end{figure*} 

To further evaluate the quality of estimated networks, we computed test data log-likelihood to compare the predictive power. We generated 500 test data sets of the same size ($n=50$) for each DAG with $p=200$. We used each estimated graph to calculate the total log-likelihood of a test data set. 
Note that the output graph of the PC algorithm is a CPDAG, for which we cannot directly calculate the test data log-likelihood.  However, since the likelihood for any data set under every DAG in an equivalence class is the same, we converted a CPDAG output by the PC algorithm to an arbitrary DAG in the equivalence class and then calculated its test data log-likelihood.
Figure \ref{fig:boxplot} is the box-plot of test data log-likelihood for the four types of graphs in terms of the difference from the median of the test data log-likelihood of the CD algorithm. 
DAGs estimated by the CD algorithm were chosen by our model selection criterion.

It is seen from Figure \ref{fig:boxplot} that our CD algorithm has the highest test data log-likelihood for three out of the four types of networks. Only for small-world networks, our CD algorithm has a slightly lower log-likelihood than the MMHC algorithm. This shows that our method also has a very good predictive power in high-dimensional cases. We see that the HC algorithm has a much lower test data log-likelihood for most cases, which suggests overfitting given the observation that this algorithm often predicts too many edges. 
These results demonstrate the critical role of sparsity not only in structure estimation but also in predictive modeling.

\subsection{Timing comparison}\label{sec:timing}

We comment briefly on the comparison of running time among the algorithms.
The running time for generating a whole solution path for the interventional data
(results in Tables~\ref{tab:cmpPC} and \ref{tab:cmpEE}) was within 40 seconds
for all graphs.
The PC algorithm was about two times faster than our CD algorithm for bipartite graphs and random DAGs, but it did not scale well for scale-free networks and small-world networks. For these two types of networks, running time of the PC algorithm was much longer for $n=500$ and $p=100$.
For the high-dimensional data in Table~\ref{tab:highDim}, it took between 2 and 20 seconds for our method to compute
the entire solution path. The speed of the PC algorithm was quite comparable to our CD algorithm on these data sets. The MMHC algorithm was faster which took at most 5 seconds for all data sets. 
The fastest algorithm was the HC algorithm, however, its accuracy in learning Bayesian networks was too bad so we will not go into details for this algorithm.
Our method gives a principled way
to incorporate interventional data and is often more accurate than the other competitors.
These merits in performance justify its utility. In addition, there is room for a more efficient implementation
of our algorithm which may improve its speed substantially.

\section{Applications to Real Networks}\label{sec:realNetworks}

In this section, we apply our CD algorithm to real networks. In Section \ref{sec:compK2}, we examine how the proposed multi-logit model compares to the product multinomial model by comparing our CD algorithm to the K2 algorithm \citep{Cooper92}. We will then apply our method to a real data set  in Section \ref{sec:dreal}.

\subsection{Comparison with the K2 Algorithm} \label{sec:compK2}

The K2 algorithm  is a well-known method for learning discrete Bayesian networks based on a product multinomial model. However, it requires an input ordering of the nodes. A wrong ordering can severely damage the quality of the estimated graph. Therefore, we provide the K2 algorithm with an ordering that is compatible with the true DAG to obtain the best estimation.  In order to conduct a fair comparison, we also input the same ordering to our CD algorithm by only running the inner loop of Algorithm \ref{algo:linesearch}, which is equivalent to a sequence of $p-1$ penalized multi-logit regression problems. With a known ordering, a main difference between the two algorithms is the underlying model, the multi-logit model for our CD algorithm and the multinomial model for the K2 algorithm. We used a Matlab package \texttt{K2} \citep{bielza2011multi} to run the algorithm. The K2 algorithm also requires an upper bound for the maximum number of parents for each node. In our experiments, we set the upper bound to be 4. We chose 8 real networks provided by the \texttt{bnlearn} package, where $p$ ranges from 8 to 441. Observational data were simulated from these networks, and for each DAG, 20 data sets were generated independently according to a product multinomial model. This comparison will demonstrate how well our proposed multi-logit model approximates the multinomial model.

\begin{table*}[ht]
\begin{center}
\caption{Comparison between our CD algorithm and the K2 Algorithm}
\label{tab:cmpK2}       
{\renewcommand{\arraystretch}{1}
\resizebox{\linewidth}{!}{%
\begin{tabular}{r l | r c c r r | r c c r r }
\hline\noalign{\smallskip}
& & \multicolumn{5}{|c|}{CD Algorithm} & \multicolumn{5}{c} {K2 Algorithm}\\
Network & ${(n, p, s_0)}$ & P & TPR & FDR & SHD & JI & P & TPR & FDR & SHD & JI  \\
\noalign{\smallskip}\hline\noalign{\smallskip}
 asia & $(250, 8, 8)$ & 10.2 & 0.719 & 0.430 & 6.7 & 0.469 & 10.1 & 0.838 & 0.331 & \bf{4.7} & \bf{0.597} \\
 sachs & $(250, 11, 17)$ & 14.5 & 0.732 & 0.133 & \bf{6.6} & \bf{0.659} & 14.7 & 0.538 & 0.374 & 13.3 & 0.408 \\
 child & $(250, 20, 25)$ & 31.1 & 0.656 & 0.469 & \bf{23.3} & \bf{0.416} & 30.1 & 0.602 & 0.500 & 25.0 & 0.376 \\
 insurance & $(250, 27, 52)$ & 50.4 & 0.473 & 0.511 & \bf{53.2} & \bf{0.316} & 51.1 & 0.414 & 0.578 & 60.0 & 0.265 \\
 alarm & (250, 37, 46) & 60.8 & 0.664 & 0.497 & \bf{45.6} & \bf{0.401} & 60.8 & 0.618 & 0.531 & 49.9 & 0.364 \\
 hailfinder & $(250, 56, 66)$ & 80.2 & 0.525 & 0.558 & 76.9 & 0.313 & 79.1 & 0.546 & 0.542 & \bf{73.0} & \bf{0.331} \\
 hepar2 & $(250, 70, 123)$ & 137.6 & 0.269 & 0.756 & \bf{194.5} & \bf{0.146} & 139.9 & 0.236 & 0.792 & 204.8 & 0.124 \\
 pigs & $(250, 441, 592)$ & 773.65 & 0.863 & 0.334 & \bf{343.5} & \bf{0.600} & 788.8 & 0.704 & 0.472 & 547.4 & 0.432 \\
\noalign{\smallskip}\hline
\end{tabular}
}}
\end{center}
\end{table*}

Summary of the comparison for the 8 networks is provided in Table \ref{tab:cmpK2}. Since a correct ordering is given, there will not be any reversed edges, and thus, in this table, we only report P, TPR, FDR, SHD, and JI. Here we matched the number of predicted edges of our CD algorithm with the K2 algorithm. It can be seen that for most graphs the SHD for our CD algorithm is lower than that of the K2 algorithm, while the JI is higher, except the networks asia and hailfinder. Since the data sets were simulated by product multinomial models, this result confirms that our proposed multi-logit model serves as a good approximation to the full multinomial model. This comparison also suggests that the group norm regularization in our method may be more efficient than using an upper bound on the parent size as in the K2 algorithm. 

\subsection{Application to flow cytometry data}\label{sec:dreal}

We consider in this section applying the CD algorithm to a real data set
that has been extensively studied. The data set was generated from a flow cytometry experiment conducted
by \citet{Sachs05}, who studied a well-known signaling network
in human primary CD4+ T-cells of the immune system.
This chosen network was perturbed by various stimulatory and
inhibitory interventions. Each interventional condition was applied to
an individual component of the network. Simultaneous measurements were
taken on $p=11$ proteins and phospholipids of this network from individual cells under each
condition. Since three interventions were targeted at proteins that were
not measured, samples collected under these conditions were
observational. Among the 11 measured
components, five proteins and phospholipids were perturbed. 
The data set contains measurements for $n=5,400$ cells. 
Each variable has three levels (high, medium and low), and
consequently, the size of a component group of $\bmbeta$ is 6 for this data set.

\begin{figure*}[ht]
\centering
\begin{minipage}[b]{0.45\linewidth}
\centering
\includegraphics[width=\linewidth, trim = 20mm 25mm -10mm -5mm]{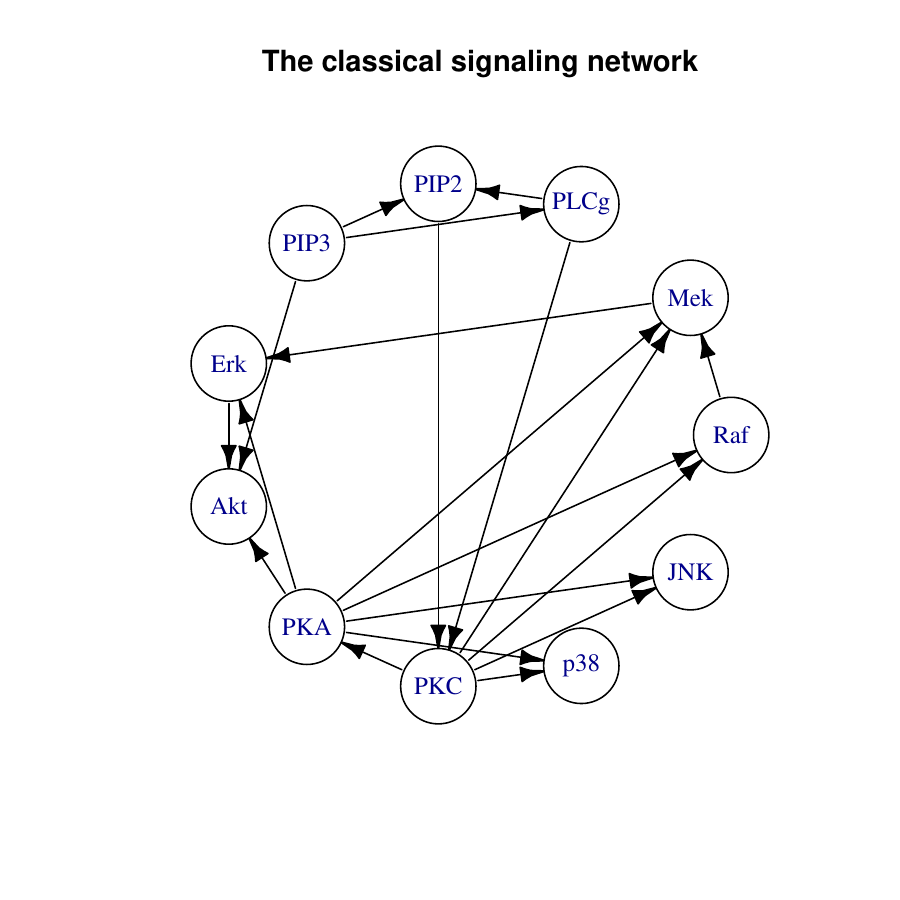} \\
(A)
\end{minipage}
\begin{minipage}[b]{0.45\linewidth}
\centering
\includegraphics[width=\linewidth, trim = 20mm 25mm -10mm -5mm]{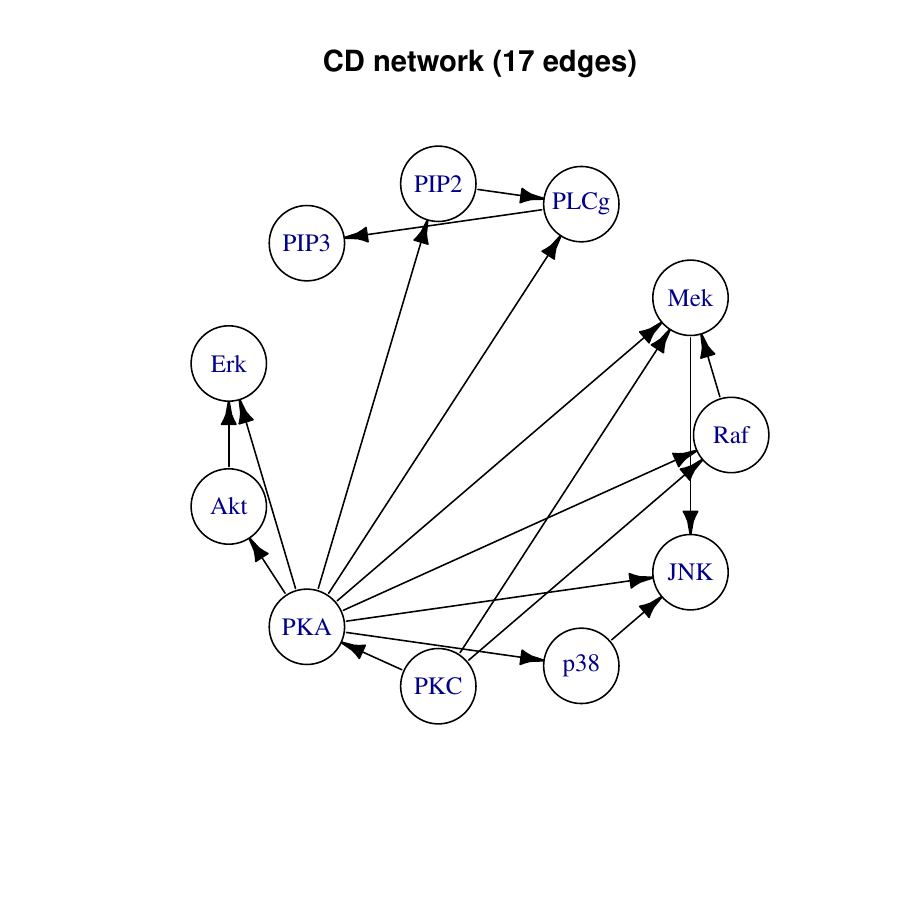} \\ 
(B)
\end{minipage}
\caption{(A) The consensus signaling network in human immune system cells, (B) DAG estimated by the CD algorithm}
\label{fig:cyto}       
\end{figure*}

Figure~\ref{fig:cyto}A is a plot for the known causal
interactions among the 11 components of this signaling network. These
causal relationships are well-established, and no consensus has been
reached on interactions beyond those present in the network. 
This network structure is often used as the benchmark to assess the
accuracy of an estimated network. Therefore, we call it the consensus
model. Our estimated network by the CD algorithm with the smallest SHD along the solution path is shown in
Figure~\ref{fig:cyto}B. The DAG is qualitatively
close to the consensus model. More detailed performance measures are reported
in Table~\ref{tab:expReal}, including both results for the DAG with the smallest SHD (CD algorithm) and the one selected by our model selection criterion (CD algorithm*).  As a comparison, we include the DAGs 
estimated by three competing methods, the order-graph sampler \citep{Ellis08}, 
the EE-DAG sampler, and the PC algorithm. 
Our  CD algorithm showed a very competitive performance, predicting more or comparable number of expected edges
and fewer reversed edges than the other three methods. 
Our method also had the least number of false positive edges among all the methods. 
All these led to the lowest SHD and highest Jaccard index for our CD algorithm.
Note that for the PC algorithm, we counted all 3 undirected edges in the true skeleton as expected edges
in the calculation of the SHD and JI. Yet 
our CD algorithm still outperformed it.
Markov chain Monte Carlo (MCMC) methods for DAG estimation 
often have good performance when the number of nodes $p$ is small, but they do not scale well.
Thus, it is comforting to see that our method, which can handle larger networks, 
outperforms MCMC methods on this relatively small network. 

\begin{table}[ht]
\begin{center}
\caption{Comparison on the flow cytometry data set}
\label{tab:expReal}       
{\renewcommand{\arraystretch}{1}
\begin{tabular}{c|rrrrrrr}
\hline\noalign{\smallskip}
Method & P & E & R & M & FP & SHD & JI  \\
\noalign{\smallskip}\hline\noalign{\smallskip}
CD algorithm & 17 & 10 & 3 & 7 & 4 & \bf{14} & \bf{0.370} \\
CD algorithm* & 14 & 8 & 3 & 9 & 3 & 15 & 0.308 \\
PC algorithm & 17 & 5(3) & 4 & 8 & 5 & 17 & 0.276 \\
Order-graph sampler & 20 & 8 & 4 & 8 & 8 & 20 & 0.250 \\
EE-DAG sampler & 26 & 9 & 6 & 5 & 11 & 22 & 0.243 \\
\noalign{\smallskip}\hline
\end{tabular}
}
\end{center}
The order-graph sampler result comes from the mean graph (Figure
11 in \citealp{Ellis08})
\end{table}


\section{Discussions}\label{sec:dsummary}

We have developed a maximum penalized likelihood method for estimating sparse
discrete Bayesian networks under a multi-logit model. In order to avoid 
penalizing separately individual dummy variables for a
factor, a group norm penalty is utilized to
encourage sparsity at the factor level. A blockwise
coordinate descent algorithm is developed where each coordinate
descent step is solved by iteratively applying a quadratic
approximation. The acyclicity constraint imposed on the structure of
Bayesian networks can be enforced in a natural way by the coordinate descent
algorithm. Our method has been evaluated on simulated graphs 
and real-world networks, with both interventional and observational data. 
We have demonstrated that
 the CD algorithm outperforms many existing methods,
particularly when $n\leq p$. We have also performed an analysis of a flow
cytometry data set generated from a signaling
network in human immune system cells. The DAG estimated by
the CD algorithm is close to the consensus model. Since the true network is not
available, the estimated edges provide candidate
causal interactions that could be tested in future experiments. 
 
Computation for estimating discrete Bayesian
networks is demanding due to the size of the parameter space and the
nonlinear nature of the multi-logit model. There is room for improving the efficiency 
of the CD algorithm. For example, one may incorporate the idea of stochastic gradient 
descent in the quadratic approximation step, which will reduce significantly the computation.
Moreover, since our search space is nonconvex,
introducing such components of stochastic optimization may also increase the chance of
finding a global minimizer of the penalized loss function. 
Other future directions include studying the consistency of our penalized estimator when
the number of nodes $p=p_n$ grows with the sample size $n$ and investigating 
the use of group concave penalties.


\section*{Appendix: Asymptotic Theory}\label{sec:dAP}

In this Appendix, we establish asymptotic theory for the DAG estimator 
$\hat{\boldsymbol{\beta}}_{\lambda}$ (\ref{eq:argminBeta}) assuming that $p$ is fixed and $n\to\infty$.
By rearranging and relabeling individual components, we rewrite $\boldsymbol{\beta}$ as
$\boldsymbol{\phi}=(\boldsymbol{\phi}_{(1)}, \boldsymbol{\phi}_{(2)})$, where
$\boldsymbol{\phi}_{(1)} = \text{vec}(
\boldsymbol{\beta}_{1\cdot 1},\ldots, 
\boldsymbol{\beta}_{1\cdot p},\ldots \ldots,
\boldsymbol{\beta}_{p\cdot 1},\ldots,
\boldsymbol{\beta}_{p\cdot p})$ is the parameter vector of
interest and $\boldsymbol{\phi}_{(2)} = \text{vec}(\boldsymbol{\beta}_{1\cdot 0},
\ldots,\boldsymbol{\beta}_{p\cdot 0})$ denotes the vector of
intercepts. Hereafter, we
denote by $\phi_j$ the $j^{\text{th}}$ group of $\boldsymbol{\phi}$,
such that $\phi_1 = \boldsymbol{\beta}_{1\cdot 1}$,  $\phi_2 = \boldsymbol{\beta}_{1\cdot 2}$, 
$\ldots, \phi_{p^2}=\bmbeta_{p\cdot p}$, and so on.
We say $\boldsymbol{\phi}$ is acyclic if the graph $\mathcal{G}_{\boldsymbol{\phi}}$
induced by $\boldsymbol{\phi}$ (or the corresponding $\boldsymbol{\beta}$) is
acyclic. 

Define $\bmphi_{[k]}$ ($k\in\{1,\ldots,p\}$)
to be the parameter vector obtained from $\boldsymbol{\phi}$ by setting
$\boldsymbol{\beta}_{k\cdot i} = \mathbf{0}$ for $i=1, \ldots, p$. 
In other words,  the DAG $\mathcal{G}_{\bmphi_{[k]}}$ is obtained by
deleting all edges pointing to the
$k^{\text{th}}$ node in $\mathcal{G}_{\boldsymbol{\phi}}$; see (\ref{eq:sld}).
We assume the data set $\mathcal{X}$ consists of $(p+1)$ 
blocks, denoted by $\mathcal{X}^j$ of size $n_j \times p$, $j=1,\ldots,p+1$. The node $X_j$ is
experimentally fixed in $\mathcal{X}^j$ for the first $p$ blocks, 
while the last block contains purely observational data.
Let $\mathcal{I}_j$ be the set of row indices of $\mathcal{X}^j$.
As demonstrated by \eqref{eq:BNint}, we can model
interventional data in the
$k^{\text{th}}$ block of the data matrix $\mathcal{X}^k$ as \textit{i.i.d.} observations
from a joint distribution factorized according to
$\mathcal{G}_{\bmphi_{[k]}}$. Denote the corresponding
probability mass function by $p(\mathbf{x}|\bmphi_{[k]})$, where
$\mathbf{x}=(x_1,\ldots,x_p)$ and $x_j \in \{ 1, \ldots, r_j \}$ for
$j=1,\ldots,p$. 
To simplify our notation, denote the parameter for the $(p+1)$th block by $\bmphi_{[p+1]} = \bmphi$.
Then the log-likelihood of $\mathcal{X}$ is
\begin{align}\label{eq:loglik}
L(\boldsymbol{\phi}) = \sum_{k=1}^{p+1} L_k(\bmphi_{[k]}) 
= \sum_{k=1}^{p+1} \log p(\mathcal{X}^k \mid \bmphi_{[k]}),
\end{align}
where $\log p(\mathcal{X}^k | \bmphi_{[k]})=\sum_{h \in \mathcal{I}_k} \log(p(\mathcal{X}_{h\cdot}|
\bmphi_{[k]}))$ and $\mathcal{X}_{h\cdot}$ $=(\mathcal{X}_{h1},\ldots,\mathcal{X}_{hp})$. 
The penalized log-likelihood function with a tuning parameter $\lambda_n>0$ is
\begin{eqnarray}\label{eq:rl}
R(\boldsymbol{\phi})
&=& L(\boldsymbol{\phi})-\lambda_n\sum_{j=1}^{p^2}\norm{\phi_j}_2 \nonumber \\
&=& \sum_{k=1}^{p+1}L_k(\bmphi_{[k]})-\lambda_n\sum_{j=1}^{p^2}\norm{\phi_j}_2,
\end{eqnarray}
where the component group $\phi_j$ $(j=1,\ldots,p^2)$
represents the influence of one variable on another. 
Let $\Upomega = \{\boldsymbol{\phi}: \mathcal{G}_{\boldsymbol{\phi}} \text{ is a DAG}\}$ be the parameter space.
A penalized estimator $\hat{\boldsymbol{\phi}}$ is obtained by
maximizing $R(\boldsymbol{\phi})$ in
$\Upomega$.

Though interventional data help distinguish equivalent DAGs, the
following notion of natural parameters is needed to completely establish
identifiability of DAGs
for the case where each variable has interventional data. 
We say that $i$ is an ancestor of $j$
in a DAG $\mathcal{G}$ if there exists at least one path
from $i$ to $j$. Denote the set of ancestors of $j$
by $\text{an}(j)$.

\begin{definition}[Natural parameters]
We say that $\boldsymbol{\phi} \in \Upomega$ is natural if
$i \in  \text{an}(j) \text{ in } \mathcal{G}_{\boldsymbol{\phi}}$ implies that
$j$ is not independent of $i$ under the joint distribution given by $\bmphi_{[i]}$
for all $i,j=1,\ldots,p$.
\end{definition}

For a causal DAG, a natural
parameter implies that the effects along
multiple causal paths connecting the same pair of nodes do not cancel. 
This is a reasonable assumption for many real-world problems,
and is much weaker than the faithfulness assumption. 
Under the faithfulness assumption, all conditional independence restrictions
can be read off from $d$-separations in the DAG. If nodes $i$ and $j$ are independent
in $\bmphi_{[i]}$, then by faithfulness the nodes $i$ and $j$ must be separated by empty
set and thus $i \notin  \text{an}(j)$ in $\mathcal{G}_{\boldsymbol{\phi}_{[i]}}$.
This implies that $i \notin  \text{an}(j)$ in $\mathcal{G}_{\boldsymbol{\phi}}$ as well,
by the construction of $\mathcal{G}_{\boldsymbol{\phi}_{[i]}}$. Indeed, we see that the faithfulness
assumption implies the natural parameter assumption.

To establish asymptotic properties of our
penalized likelihood estimator, we make the following assumptions:
\begin{itemize}
\item[(A1)] The true parameter $\boldsymbol{\phi}^*$ is natural
and an interior point of $\Upomega$.
\item[(A2)] The parameter $\boldsymbol{\theta}_j$ of the conditional distribution
$[X_j | \Uppi_j^{\mathcal{G}}; \boldsymbol{\theta}_j]$ is identifiable for each $j=1,\ldots,p$. 
The log-likelihood function $\ell_j(\boldsymbol{\theta}_j) = \log p({x}_j|\Uppi_j^{\mathcal{G}}; \boldsymbol{\theta}_j)$ is strictly concave and continuously three times differentiable for any interior point.
\end{itemize}
Recall that the $k^{\text{th}}$ block of our data, $\mathcal{X}^k$, can be regarded as
an \textit{i.i.d.} sample of size $n_k$ from the distribution $p(\mathbf{x}|\bmphi_{[k]}^*)$
for all $k$, while we define $\bmphi_{[p+1]}^*=\bmphi^*$ for the last block of observational data.
 
\begin{theorem}\label{didentifiable}
Assume (A1) and (A2). If
$p(\mathbf{x}|\bmphi_{[k]})=p(\mathbf{x}|\bmphi_{[k]}^*)$ for all possible $\mathbf{x}$ and
all $k=1,\ldots,p$, then $\boldsymbol{\phi}=\boldsymbol{\phi}^*$.
Furthermore, if $n_k\gg \sqrt{n}$ for all $k=1,\ldots,p$,
then for any $\boldsymbol{\phi} \neq \boldsymbol{\phi}^*$,
\begin{equation}\label{eq:didentifiability2}
P(L(\boldsymbol{\phi}^*)>L(\boldsymbol{\phi})) \rightarrow 1 \;\;\;\text{ as } n \to \infty.
\end{equation}
\end{theorem}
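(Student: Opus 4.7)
The plan is to split the theorem into the population-level identifiability assertion (first statement) and the a.s.\ separation of log-likelihoods (second statement), deducing the latter from the former by a standard law-of-large-numbers argument.

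For identifiability, I would first show $\mathcal{G}_{\bmphi}=\mathcal{G}_{\bmphi^*}$ and then reduce the remaining parameter identification to the identifiability of the symmetric multi-logit model with the constraints \eqref{eq:constraint}--\eqref{eq:constraint2}. The graph-matching step rests on two observations: (i) the general DAG fact that intervening on $X_i$ cannot affect any node outside $\mathrm{des}(X_i)$, so $X_j\notin\mathrm{des}(X_i)$ in $\mathcal{G}_{\bmphi}$ implies $X_i\perp X_j$ under $p(\cdot\mid\bmphi_{[i]})$; and (ii) the naturalness of $\bmphi^*$ (assumption (A1)), which supplies the converse on the $\bmphi^*$ side: $X_j\in\mathrm{des}^*(X_i)$ implies $X_i\not\perp X_j$ under $p(\cdot\mid\bmphi^*_{[i]})$. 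Combined with the assumed equality of interventional distributions, these already give $\mathrm{des}^*(X_i)\subseteq\mathrm{des}(X_i)$ for each $i$. To upgrade this containment to graph equality, I would write the factorization
\[p(\mathbf{x}\mid\bmphi_{[k]})=p(x_k\mid\bullet)\prod_{j\ne k}p\bigl(x_j\mid\Pi_j^{\mathcal{G}_{\bmphi}},\bmphi\bigr),\]
divide through by $p(x_k\mid\bullet)$, multiply the resulting $p$ identities over $k=1,\dots,p$---noting that each conditional factor $p(x_j\mid\Pi_j,\cdot)$ appears exactly $p-1$ times on each side---and take the $(p-1)$th root to get the observational identity $p(\mathbf{x}\mid\bmphi)=p(\mathbf{x}\mid\bmphi^*)$. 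Dividing this observational identity by the $k$-specific one yields
\[p\bigl(x_k\mid\Pi_k^{\mathcal{G}_{\bmphi}},\bmphi\bigr)=p\bigl(x_k\mid\Pi_k^{\mathcal{G}_{\bmphi^*}},\bmphi^*\bigr)\]
as functions of the full $\mathbf{x}$. A short calculation using \eqref{eq:constraint2} shows that in the symmetric multi-logit, $p(x_k\mid\Pi_k,\bmphi)$ depends on $x_i$ if and only if $\bmbeta_{k\cdot i}\ne\mathbf{0}$; matching the coordinate supports on the two sides then forces $\Pi_k^{\mathcal{G}_{\bmphi}}=\Pi_k^{\mathcal{G}_{\bmphi^*}}$ for every $k$, so $\mathcal{G}_{\bmphi}=\mathcal{G}_{\bmphi^*}$. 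Standard identifiability of the multinomial-logit regression, given common parent sets and the constraints, then lifts equality of the conditional distributions to $\bmbeta_{k\cdot i}$-wise equality, hence $\bmphi=\bmphi^*$. The hard part of the argument is dealing with intervention distributions $p(x_k\mid\bullet)$ that are not full-support (for example, a point mass at a single level), where the factorization trick breaks down and one must instead induct along a topological order of $\mathcal{G}_{\bmphi^*}$, using naturalness at each step to rule out spurious causal cancellations along paths in $\mathcal{G}_{\bmphi}$.

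For the second statement, I would decompose $L(\bmphi^*)-L(\bmphi)=\sum_{k=1}^p[L_k(\bmphi^*_{[k]})-L_k(\bmphi_{[k]})]$ and observe that each $L_k(\bmphi_{[k]})$ is a sum of $n_k$ i.i.d.\ log-densities evaluated on a sample drawn from $p(\cdot\mid\bmphi^*_{[k]})$. The strong law of large numbers, under $n_k/n\to\alpha_k>0$, gives
\[\frac{1}{n}\bigl[L(\bmphi^*)-L(\bmphi)\bigr]\;\xrightarrow{\mathrm{a.s.}}\;\sum_{k=1}^p\alpha_k\,\mathrm{KL}\!\left(p(\cdot\mid\bmphi^*_{[k]})\,\big\|\,p(\cdot\mid\bmphi_{[k]})\right).\]
Each summand is nonnegative by Gibbs' inequality, and by the identifiability just proved, $\bmphi\ne\bmphi^*$ forces $p(\cdot\mid\bmphi_{[k]})\ne p(\cdot\mid\bmphi^*_{[k]})$ for at least one $k$, making the corresponding KL divergence strictly positive. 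The limit is therefore strictly positive, which immediately yields $P(L(\bmphi^*)>L(\bmphi))\to 1$.
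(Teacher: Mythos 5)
Your proof of the second claim is the same as the paper's: normalize by $n$, apply the law of large numbers block by block, and identify the limit of $n^{-1}[L(\bmphi)-L(\bmphi^*)]$ as $-\sum_k\alpha_k\,\mathrm{KL}\bigl(p(\cdot\mid\bmphi^*_{[k]})\,\|\,p(\cdot\mid\bmphi_{[k]})\bigr)$, which is strictly negative for $\bmphi\ne\bmphi^*$ by the first claim. For the first claim you take a genuinely different route. The paper splits into two cases according to whether $\calG_{\bmphi}$ and $\calG_{\bmphi^*}$ share a topological sort: if they do, it factorizes both joints along the common ordering and matches parent sets conditional by conditional; if they do not, it exhibits a pair $(i,j)$ with $X_i$ an ancestor of $X_j$ only in $\calG_{\bmphi^*}$ and invokes naturalness to contradict $p(\cdot\mid\bmphi_{[i]})=p(\cdot\mid\bmphi^*_{[i]})$. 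You instead reconstruct the observational joint algebraically from the $p$ interventional joints via the product-and-$(p-1)$th-root identity, isolate each node's parent-conditional through the ratio $p(\mathbf{x}\mid\bmphi)/p(\mathbf{x}\mid\bmphi_{[k]})=p(x_k\mid\Pi_k,\bmphi)/p(x_k\mid\bullet)$, and match dependence supports. This avoids the case split entirely, and — worth noticing — as written it never actually uses naturalness: the descendant-set containment you derive from (A1) is not consumed by the factorization argument that follows it.

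The cost is that every division by $p(x_k\mid\bullet)$ requires the intervention distributions to have full support, a hypothesis not present in the theorem (the paper's interventions ``fix'' the level of $X_k$, so a point mass is the canonical case, and the equality $p(\mathbf{x}\mid\bmphi_{[k]})=p(\mathbf{x}\mid\bmphi^*_{[k]})$ then carries information only on a slice). You flag this yourself, but the deferred induction along a topological sort ``using naturalness at each step'' is precisely where the content of the paper's Case 2 lives and is the only place (A1) is genuinely needed; without it your argument is complete only for full-support (stochastic) interventions, where it is arguably stronger than the paper's since naturalness becomes unnecessary. Your support-matching step also relies on the lemma that under constraint \eqref{eq:constraint2} the multi-logit conditional for $X_k$ depends on $x_i$ if and only if $\bmbeta_{k\cdot i}\ne\mathbf{0}$; this is correct and is the same identifiability fact the paper uses implicitly in its Case 1, but it should be stated and proved rather than left as a ``short calculation.''
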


\begin{theorem}\label{dOLSconsistency}
Assume (A1) and (A2). If $\lambda_n/\sqrt{n}\rightarrow 0$ and $n_k\gg \sqrt{n}$ for all $k=1,\ldots,p$,
then there exists a global maximizer $\hat{\boldsymbol{\phi}}$ of 
$R(\boldsymbol{\phi})$ such that 
$\norm{\hat{\boldsymbol{\phi}}-\boldsymbol{\phi}^*}_2=O_p(n^{-1/2})$.
\end{theorem}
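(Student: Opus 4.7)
The plan is to follow the classical Fan--Li (2001) $\sqrt{n}$-consistency argument adapted to the group penalty and the DAG constraint. Set $\alpha_n=n^{-1/2}$ and, for any fixed $\epsilon>0$, try to show that there exists a constant $C>0$ such that, for all sufficiently large $n$,
\begin{equation*}
P\Bigl\{\sup_{\|u\|_2=C,\; \bmphi^*+\alpha_n u\in\Omega} R(\bmphi^*+\alpha_n u) < R(\bmphi^*)\Bigr\}\ge 1-\epsilon.
\end{equation*}
Because $R$ is continuous and $\bmphi^*$ lies in the interior of $\Omega$ under (A1), this forces a local maximizer of $R$ inside the ball $\{\bmphi:\|\bmphi-\bmphi^*\|_2\le C n^{-1/2}\}$, giving the stated $O_p(n^{-1/2})$ rate.

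To verify the display above, I would Taylor expand each block $L_k(\bmphi^*_{[k]}+\alpha_n u_{[k]})$ around $\bmphi^*_{[k]}$ to second order. The multi-logit model (A2) has bounded, smooth log-likelihood derivatives of all orders, so standard MLE regularity applies block by block. Writing $\nabla L(\bmphi^*)$ and $\nabla^2 L(\bmphi^*)$ for the score and Hessian under the mixed experimental design, the CLT gives $\alpha_n\nabla L(\bmphi^*)=O_p(1)$ and the LLN, together with $n_k/n\to\alpha_k>0$, gives $\alpha_n^2\nabla^2 L(\bmphi^*)\to -I(\bmphi^*)$, where $I(\bmphi^*)=\sum_k\alpha_k I_k(\bmphi^*_{[k]})$ is a convex combination of Fisher information matrices. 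Consequently
\begin{equation*}
L(\bmphi^*+\alpha_n u)-L(\bmphi^*)=\alpha_n u^\top\nabla L(\bmphi^*)-\tfrac12 u^\top I(\bmphi^*)u+o_p(\|u\|^2),
\end{equation*}
whose leading behaviour is $O_p(1)\|u\|-\tfrac{c}{2}\|u\|_2^2$ with $c>0$ the smallest eigenvalue of $I(\bmphi^*)$.

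For the penalty term, the triangle inequality yields $\bigl|\|\phi_j^*+\alpha_n u_j\|_2-\|\phi_j^*\|_2\bigr|\le\alpha_n\|u_j\|_2$, so using $\tau_j=O_p(1)$ and the fact that there are only $p^2$ groups with $p$ fixed,
\begin{equation*}
\Bigl|\lambda_n\sum_{j=1}^{p^2}\tau_j\bigl[\|\phi_j^*+\alpha_n u_j\|_2-\|\phi_j^*\|_2\bigr]\Bigr|\le \lambda_n\alpha_n\Bigl(\sum_j\tau_j^2\Bigr)^{1/2}\|u\|_2 = O_p(\lambda_n/\sqrt{n})\,\|u\|_2=o_p(\|u\|_2),
\end{equation*}
where the last equality uses $\lambda_n/\sqrt{n}\to 0$. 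Adding these pieces, $R(\bmphi^*+\alpha_n u)-R(\bmphi^*)=O_p(1)\|u\|_2-\tfrac{c}{2}\|u\|_2^2+o_p(\|u\|_2^2)$, which is negative on the sphere $\|u\|_2=C$ with probability tending to one once $C$ is chosen large enough. This completes the Fan--Li step.

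The main obstacle I anticipate is twofold. First, positive definiteness of $I(\bmphi^*)$ is not automatic: one needs to deduce it from the identifiability given in Theorem~\ref{didentifiable} together with (A1), essentially by noting that if $u^\top I(\bmphi^*)u=0$ then the score is degenerate along direction $u$, contradicting uniqueness of $\bmphi^*$ as maximizer of the population log-likelihood. Second, the acyclicity constraint means $\Omega$ is not an open Euclidean set near components of $\bmphi^*$ that are zero, since perturbing a missing edge in the wrong direction could create a cycle; the sphere argument must be carried out over the subset of $u$ for which $\bmphi^*+\alpha_n u\in\Omega$, and one must separately argue that the \emph{global} maximizer over $\Omega$ cannot escape the $O_p(n^{-1/2})$ neighbourhood. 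The latter is handled by invoking Theorem~\ref{didentifiable}: outside any fixed neighbourhood of $\bmphi^*$, $L(\bmphi)-L(\bmphi^*)\to-\infty$ in probability at rate $n$, while the penalty contributes only $O(\lambda_n)=o(n)$, so any global maximizer must eventually concentrate near $\bmphi^*$ and then coincide with the local one constructed above.
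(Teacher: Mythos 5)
Your local (Fan--Li) step is essentially the paper's \emph{Case 1}: a second-order expansion of $L$ blockwise, the eigenvalue bound on the weighted Fisher information, and the $\lambda_n/\sqrt{n}\to 0$ bound on the penalty increment all appear in the paper's proof (compare \eqref{eq:deigen}--\eqref{eq:local}). One remark there: the paper does not deduce positive definiteness of $\mathbf{I}(\bmphi^*_{[k]})$ from global uniqueness of the population maximizer as you suggest (global uniqueness is compatible with higher-order degeneracy of the Hessian); it first proves Lemma~\ref{lm:nb}, showing that every DAG in a small neighborhood of $\bmphi^*$ shares a topological sort with $\calG_{\bmphi^*}$, so that the \emph{Case 1} identifiability argument of Theorem~\ref{didentifiable} applies locally and yields positive definiteness. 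You should make that neighborhood restriction explicit rather than arguing from the score being ``degenerate along $u$.''

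The genuine gap is in your final paragraph, where you pass from the local to the global maximizer. You invoke Theorem~\ref{didentifiable} to claim that ``outside any fixed neighbourhood of $\bmphi^*$, $L(\bmphi)-L(\bmphi^*)\to-\infty$ in probability at rate $n$.'' Theorem~\ref{didentifiable} is a \emph{pointwise} statement: for each fixed $\bmphi\neq\bmphi^*$, $P(L(\bmphi^*)>L(\bmphi))\to 1$. The set $\Omega\setminus\text{nb}(\bmphi^*)$ is uncountable and, more importantly, unbounded (the multi-logit coefficients are unconstrained in magnitude), so a uniform law of large numbers over it does not follow for free, and without uniformity you cannot rule out that the global maximizer escapes every fixed neighborhood. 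The paper closes exactly this hole with a device your proposal lacks: it writes $\Omega=\bigcup_{i=1}^{p!}\Omega_i$, where $\Omega_i$ is the (convex) set of parameters compatible with the $i$th topological ordering; on each $\Omega_i$ the objective $R$ is strictly concave, so it has a unique local maximizer $\hat{\bmphi}^i$, and by the convexity lemma for concave random functions (Andersen--Gill; Pollard) pointwise convergence of $n^{-1}R$ already forces $\hat{\bmphi}^i\to_p\bmphi^i$, the maximizer of the limit on $\Omega_i$. This reduces the global comparison to the \emph{finite} set $\calM=\{\bmphi^i:\bmphi^i\neq\bmphi^*\}$, on which the pointwise conclusion of Theorem~\ref{didentifiable} can be applied uniformly (see \eqref{eq:localmodes}). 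To repair your argument you either need this finite-reduction step or an explicit uniform-convergence/compactification argument over $\Omega\setminus\text{nb}(\bmphi^*)$; as written, the last paragraph asserts the conclusion rather than proving it.
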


Proofs of the two theorems are relegated to the Supplemental Material.
Theorem~\ref{didentifiable} confirms that the causal DAG model is identifiable with interventional data
assuming a natural parameter. Theorem~\ref{dOLSconsistency} implies that there is a $\sqrt{n}$-consistent
global maximizer of $R(\bmphi)$ with the group norm penalty. Note that Assumption (A2) does not specify
a particular choice of model for the conditional distribution $[X_j | \Uppi_j^{\mathcal{G}}]$
and thus these theoretical results apply to a large class of DAG models for discrete data.
In particular, the multi-logit regression model (\ref{eq:mlogit}) satisfies (A2).
 
\begin{remark}
The assumption on the sample size of interventional data, $n_k\gg \sqrt{n}$, imposes
a lower bound on how fast the fraction $\alpha_k=n_k/n\gg n^{-1/2}$ can approach zero for $k=1,\ldots,p$.
Although this allows the observational data to dominate when $\alpha_k\to 0$, the fractions of interventional data
must be larger than the typical order $O_p(n^{-1/2})$ of statistical errors so that \eqref{eq:didentifiability2}
can hold to establish identifiability of the true causal DAG parameter $\boldsymbol{\phi}^*$.
This guarantees that the global maximizer $\hat{\boldsymbol{\phi}}$ 
will locate in a neighborhood of $\boldsymbol{\phi}^*$
with high probability. Once in this neighborhood, the convergence rate
of $\hat{\boldsymbol{\phi}}$ then depends on the size $n$ of all data, both interventional and observational. 
Therefore, increasing the size of observation data will lead to more accurate estimate $\hat{\boldsymbol{\phi}}$
as long as we keep $\alpha_k\gg n^{-1/2}$ for $k=1,\ldots,p$.
\end{remark} 

\begin{remark}
It is interesting to generalize the above asymptotic results to the case where
$p=p_n$ grows with the sample size $n$, say, by developing nonasymptotic bounds on 
the $\ell_2$ estimation error
$\|\hat{\bmphi}-\bmphi^*\|_2$. However, in order to estimate the causal network
consistently, sufficient interventional data are needed for each node, i.e., $n_k$ must approach infinity,
and thus $p/ n \to 0$ as $n\to \infty$. This limits us to the low-dimensional setting with $p<n$. Suppose
we have a large network with $p\gg n$. 
One may first apply some regularization method on observational data 
to screen out independent nodes and to partition the network
into small subgraphs that are disconnected to one another. Then for each small subgraph, we can afford
to generate enough interventional data for every node and apply the method in this paper to infer the causal structure. 
Our asymptotic theory provides useful guidance for the analysis in the second step.
\end{remark}

For purely observational data, the theory becomes more complicated due to the existence
of equivalent DAGs and parameterizations. 
It is left as future work to establish the consistency of a global maximizer for high-dimensional observational data.

\bibliographystyle{asa}
\bibliography{references}

\end{document}